\documentclass[letterpaper]{article} 
\usepackage[]{aaai25}  
\usepackage{times}  
\usepackage{helvet}  
\usepackage{courier}  
\usepackage[hyphens]{url}  
\usepackage{graphicx} 
\urlstyle{rm} 
\usepackage{natbib}  
\usepackage{caption} 
\usepackage{subcaption}
\frenchspacing  
\setlength{\pdfpagewidth}{8.5in} 
\setlength{\pdfpageheight}{11in} 

\usepackage{booktabs}       
\usepackage{amsthm}
\usepackage{amsmath,amssymb,amsfonts} 
\usepackage{cleveref}
\usepackage{nicefrac}       
\usepackage{microtype}      
\usepackage{xcolor}         
\usepackage{algorithm} 
\usepackage{algpseudocode}
\usepackage{graphicx}
\usepackage{physics}
\usepackage{subcaption}
\usepackage{thmtools}
\usepackage{svg}
\usepackage{thm-restate}

\usepackage{enumitem}
\usepackage{fixbib}

\newtheorem{thm}{Theorem }
\newtheorem{pro}{Proposition}
\newtheorem{cor}{Corollary}
\newtheorem{lem}{Lemma }
\newtheorem{dnt}{Definition}
\newtheorem{rem}{Remark }
\newtheorem{assumption}{Assumption}

\newtheorem*{theorem*}{Theorem}

\usepackage{newfloat}
\usepackage{listings}
\DeclareCaptionStyle{ruled}{labelfont=normalfont,labelsep=colon,strut=off} 
\lstset{%
	basicstyle={\footnotesize\ttfamily},
	numbers=left,numberstyle=\footnotesize,xleftmargin=2em,
	aboveskip=0pt,belowskip=0pt,%
	showstringspaces=false,tabsize=2,breaklines=true}
\floatstyle{ruled}
\newfloat{listing}{tb}{lst}{}
\floatname{listing}{Listing}
%
\pdfinfo{
/TemplateVersion (2025.1)
}

\pdfinfo{
/Title (Insert Your Title Here)
/Author (Thanin Quartz, Ruikun Zhou, Hans De Sterck, Jun Liu}
\setcounter{secnumdepth}{1}  

%
\title{Stochastic Reinforcement Learning with Stability Guarantees\\ for Control of Unknown Nonlinear Systems}
\author {
    Thanin Quartz\textsuperscript{\rm 1},
    Ruikun Zhou\textsuperscript{\rm 1},
    Hans De Sterck\textsuperscript{\rm 1},
    Jun Liu \textsuperscript{\rm 1} 
}
\affiliations {
    \textsuperscript{\rm 1}University of Waterloo\\
    \{tquartz, ruikun.zhou, hans.desterck, j.liu\}@uwaterloo.ca
}

\usepackage{bibentry}

\begin{document}
\maketitle
\begin{abstract}
Designing a stabilizing controller for nonlinear systems is a challenging task, especially for high-dimensional problems with unknown dynamics. Traditional reinforcement learning algorithms applied to stabilization tasks tend to drive the system close to the equilibrium point. However, these approaches often fall short of achieving true stabilization and result in persistent oscillations around the equilibrium point. In this work, we propose a reinforcement learning algorithm that stabilizes the system by learning a local linear representation of the dynamics. The main component of the algorithm is integrating the learned gain matrix directly into the neural policy. We demonstrate the effectiveness of our algorithm on several challenging high-dimensional dynamical systems. In these simulations, our algorithm outperforms popular reinforcement learning algorithms, such as soft actor-critic (SAC) and proximal policy optimization (PPO), and successfully stabilizes the system. To support the numerical results, we provide a theoretical analysis of the feasibility of the learned algorithm for both deterministic and stochastic reinforcement learning settings, along with a convergence analysis of the proposed learning algorithm. Furthermore, we verify that the learned control policies indeed provide asymptotic stability for the nonlinear systems.
\end{abstract}

\section{Introduction}
In recent years reinforcement learning (RL) has achieved considerable success in complex games \cite{Silver2016-ab}, \cite{Vinyals2019-lm} and has received attention from the control community due to its ability to learn controllers in complex high-dimensional control tasks \cite{Lillicrap2015-pt}, \cite{Schulman2017-ap}. A central problem in this context is stability. In RL, stability is closely related to safety, and safety requirements that cannot be directly learned during the training phase reduce the reliability of these algorithms when applied to real-world safety-critical problems outside of simulations or lab environments. To address these challenges in safety-critical tasks, researchers have proposed solutions such as filters designed to block unwanted actions, which may involve human-based design \cite{Saunders2017-vu} or formal guarantees \cite{Alshiekh2018-dk}. A detailed overview of techniques in safe RL can be found in \cite{Pecka2014-la} and \cite{Garcia2015-el}. \newline

\noindent In the control community, stability guarantees can be derived in the linear case and its properties are well studied. Unfortunately, designing a stabilizing controller for nonlinear systems and estimating its region of attraction is one of the most challenging tasks, especially for systems with uncertainties \cite{matallana_estimation_2010}. In this setting, it is common to design a controller for the linearized system around some neighborhood of the equilibrium point. While this formulation provides local guarantees, it restricts stability and performance to a small neighborhood around the equilibrium point. In cases where the system is unknown or partially known, finding a stabilizing controller using classical methods becomes nearly impossible. As a result, techniques that rely on initial stabilizing controllers, such as adaptive dynamic programming~\cite{jiang2017robust}, may not be feasible for optimal control design. To overcome the limitations associated with local guarantees, we propose a reinforcement learning algorithm to integrate local stabilization directly into the neural policy. Our algorithm assumes no prior knowledge of the system dynamics. To achieve local stabilization, we iteratively estimate the local state-space representation and compute the local optimal gain matrix using LQR. In contrast to many classical methods, we do not assume knowledge of an initial stabilizing policy or a Lyapunov function as the process of learning and verifying a Lyapunov function is often complex and challenging \cite{Zhou2022-yt}.

\subsection{Related Work}
The idea of combining reinforcement learning with control theory has received considerable attention in recent years. For example, the work by \cite{Zanon2021-bt} and \cite{Beckenbach2019-dq} combine reinforcement learning with techniques from stabilizing model predictive control (MPC). Nevertheless, the primary focus of merging techniques in control theory and reinforcement learning has been on the development of Lyapunov-based reinforcement learning methods. A detailed account on the aforementioned methods can be found in \cite{Osinenko2022-ax}. However, despite recent progress, stabilization remains a challenging task, and these approaches often assume the existence of a local stabilizing controller and typically test their applicability on low-dimensional systems.  \newline 

\noindent Initial work combining Lyapunov theory and RL was proposed in \cite{Perkins2003-jh} where a control policy learns to switch among a number of base-level controllers and are designed using Lyapunov domain knowledge. More recently, Lyapunov theory has been employed for safe reinforcement learning, as demonstrated in \cite{Berkenkamp2016-cc}, with subsequent model-based advancements detailed in \cite{Berkenkamp2017-vo}. An approach to local stabilization, assuming knowledge of the model dynamics, is discussed in \cite{Zoboli2021-ot}. In this approach, a global controller is learned using an actor-critic method, and the LQR optimal gain controller is designed to linearly dominate around the equilibrium point. The paper also formulates conditions for ensuring local stability in their policy design. Another approach is presented in \cite{Han2020-mc} where the authors use a Lyapunov function as the critic for guaranteeing stability in the mean cost of the learnt policy and prove stabilization under high probability. An alternative constrained optimization approach for the actor's objective function in the actor-critic method was proposed in \cite{OSINENKO20208043}. The authors demonstrated practical semi-global stabilization of the sample hold closed-loop controller. The same authors later applied a similar technique to constrain the actor policy for systems with unknown parameters, establishing closed-loop stability guarantees as discussed in \cite{GOHRT20208157}.  

\subsection{Our Contribution}
We summarize the main contributions of the paper as follows:
\begin{itemize}
    \item We propose an algorithm that learns to stabilize unknown high dimensional nonlinear systems without knowledge of an initial stabilizing controller. The main component of the algorithm integrates the learned gained matrix directly into the neural policy. The algorithm is then shown to stabilize several challenging nonlinear systems. 
    \item We prove that reinforcement learning algorithms typically only guarantee practical stability and we prove the convergence for our learning algorithm to an asymptotically stabilizing neural policy. 
    \item We verify that the learned neural policy indeed stabilizes the systems using an SMT solver-based toolbox in the sense of Lyapunov stability.
\end{itemize}
 
\section{Preliminaries}
\subsection{Notation}
The following notations will be used in the paper. \newline 

\noindent Let $\mathbb{R}:=(-\infty, \infty), \mathbb{R}_{\geq 0}:=[0, \infty)$, $\mathbb{N}:=\{1,2, \ldots\}$. A function $\alpha: \mathbb{R}_{\geq 0} \rightarrow \mathbb{R}_{\geq 0}$ is of class $\mathcal{K}$ if it is continuous, zero at zero and strictly increasing, and it is of class $\mathcal{K}_{\infty}$ if, in addition, it is unbounded. A continuous function $\alpha: \mathbb{R}_{\geq 0}^2 \longrightarrow \mathbb{R}_{\geq 0}$ is of class $\mathcal{K} \mathcal{L}$ if for each $t \in \mathbb{R}_{\geq 0}, \alpha(\cdot, t)$ is of class $\mathcal{K}$, and, for each $s>0, \alpha(s, \cdot)$ is decreasing to zero. The Euclidean norm of a vector $x \in \mathbb{R}^n$ is denoted by $\|x\|$. On a compact set $D$ we denote the $\infty$-norm $\|f\|_D := \sup_{x \in D} |f(x)|$. Let $P$ be a real, square, and symmetric matrix, $\lambda_{\max }(P)$ and $\lambda_{\min }(P)$ are respectively the largest and the smallest eigenvalue of $P$.

\subsection{Dynamic Programming}
Throughout this work we consider discrete dynamical systems generating a sequence of states $\{x_k \}_{k=1}^\infty$, $x \in D \subset \mathbb{R}^n$, where $D$ is a compact and convex set, starting from some initial state $x_0 \in \mathbb{R}^n$ by the nonlinear control system 
\begin{equation} \label{dyn}
    x_{k+1} = f(x_k, u_k)
\end{equation}
where $\{u_k\}_{k=1}^\infty$, $u \in \mathcal{U} \subset \mathbb{R}^m$ are a sequence of control inputs. Moreover, we assume that the right hand side of (\ref{dyn}) is Lipschitz to guarantee a unique solution.

\begin{assumption}\label{Assumption_Lipschitz}
    Suppose that the vector field $f(x,u)$ is Lipschitz. That is, there exists constants $L_x$, $L_u$ such that 
    \begin{equation}
        \|f(x, u) - f(y, w) \| \leq L_x \| x - y \| + L_u \|u - w \|,
    \end{equation}
    for all $x,y \in D$ and $u, w \in \mathcal{U}$.
\end{assumption}

\noindent As is common for many tasks suitable for RL, we assume that we do not have explicit knowledge of the right hand side of (\ref{dyn}). For this discrete system, denote an infinite sequence of control inputs as $\mathbf{u} = (u_1, u_2, \hdots)$ and consequently denote the solution at the $k^\text{th}$ time state with initial condition $x_0$ as $\Psi(k, x_0, \mathbf{u}(k))$ where $\mathbf{u}(k) = u_k$. The function $\Psi$ is called the \textit{flow map}. Given a reward function $r : \mathbb{R}^n \times \mathbb{R}^m \to \mathbb{R}$ we define the discounted objective function.

\begin{dnt}[Objective Function] \label{objective}
The $\gamma$-discounted objective function, $\mathbf{J}_\gamma^{\mathbf{u}} : \mathbb{R}^n \to \mathbb{R}$ is defined as 
\begin{equation} \label{reward_flow}
    \mathbf{J}_\gamma^{\mathbf{u}}(x_0) := \sum_{k=0}^\infty \gamma^k r(\Psi(k, x_0, \mathbf{u}(k)), \mathbf{u}(k)). 
\end{equation}
where $\gamma \in (0, 1)$. 
\end{dnt}

\begin{rem}
    In this setting where transitions are deterministic, the definitions of the objective function and the value function coincide.
\end{rem}

\noindent The objective of reinforcement learning is to identify a controller $\mathbf{u}$ that maximizes the objective function.

\begin{dnt}[Optimal Value Function]\label{optimal_value}
    The optimal value at a state $x \in \mathbb{R}^n$ is denoted as $V_\gamma(x_0)$ and is defined as 
    \begin{equation}
        V_\gamma(x_0) := \sup_{\mathbf{u}} \mathbf{J}^{\mathbf{u}}_\gamma(x).
    \end{equation}
\end{dnt}

\noindent However, in most RL algorithms, estimating the \textit{action-value function} leads to greater performance in the model free setting. 

\begin{dnt}[Action-Value Function]
\label{action_value}
    The action value function $\textbf{Q}^{\mathbf{u}} : \mathbb{R}^n \times \mathbb{R}^m \to \mathbb{R}$ is defined as 
    \begin{equation}
        \textbf{Q}^{\mathbf{u}}(x, u) = r(x, u) + \mathbf{J}^\mathbf{u}_\gamma (y)
    \end{equation}
    where $y = f(x, u)$. That is, the cumulative reward at state $x$ by taking action $u$ and subsequently following the control sequence $\mathbf{u}$ at the state $y$.
\end{dnt}

\subsection{Infinite Horizon LQR and Stability}
\noindent Consider the discrete time linearization of $f(x, u)$ 

\begin{equation} \label{liner_dyn}
    x_{n+1} = A x_n + B u_n
\end{equation}

\noindent about the equilibrium point where we can assume without loss of generality that the origin is the equilibrium point. The infinite horizon LQR problem seeks a stablizing controller $\textbf{u}$ that minimizes a quadratic cost function, thereby ensuring optimal performance and stability of the closed-loop system. The cost in control is dual to the reward formulation in RL. In particular, the cost has the form 

\begin{equation} \label{quadratic_cost}
    \textbf{J}^{\text{LQR}} = \sum_{n=0}^\infty \left( x_n^T Q x_n + u_n^T R u_n \right)
\end{equation}

\noindent where $Q$ is a positive definite matrix $R$ is a positive semi definite matrix. In \cite{Bertsekas2012-ot} the optimal policy $\mathbf{u}^{\text{LQR}}$ is derived for state system (\ref{liner_dyn}) with cost (\ref{quadratic_cost}) and is given by the feedback controller $\mathbf{u}^{\text{LQR}}(x) = -K x$ where

\begin{equation}\label{LQR_K}
    K = \left(R+ B^T P B\right)^{-1} B^T P A
\end{equation}

\noindent and $P$ satisfies the discrete algebraic Ricatti equation (DARE)
\begin{equation}\label{LQR_P}
    P = Q + A^T\left(P - P B\left(R + B^T P B\right)^{-1} B^T P\right) A.
\end{equation}

\noindent It is also well known that for linear dynamics, the controller $\mathbf{u}(x) = -Kx$ is a stabilizing controller in the sense of the following definition.

\begin{dnt}[Asymptotic Stability]
An equilibrium point for the closed loop system 
\begin{equation}\label{closedloop_dyn}
    x_{k+1} = f(x_k, \mathbf{u}(x_k))
\end{equation}
is said to be asymptotically stable, if the following conditions are satisfied: \newline
(1) (Lyapunov stability) For every $\epsilon>0$, there exists a $\delta>0$ such that $\|x_0\|<\delta_{\epsilon}$ implies that $x_k$ is defined for all $k$ and $\|x_k\|<\epsilon$ for all $k$; and \newline 
(2) (Attractivity) There exists some $\rho>0$ such that, for every $\epsilon>0$, there exists some $N \in \mathbb{N}$ such that $x_k$ is defined for all $k \geq N$ and $\|x_k\|<\epsilon$ whenever $\|x_k\|<\rho$ and $k \geq N$.
\end{dnt} 

\subsection{Soft Actor Critic}
The soft actor critic algorithm \cite{Haarnoja2018-ha} is a popular reinforcement learning algorithm that has been successfully applied to continuous control tasks. In the RL setting, when a neural network parameterizes the controller, we refer to the controller as a \textit{policy}, denoted by $\pi_\phi$, to differentiate it from a general admissible controller $\mathbf{u}$. Additionally, we consider a probabilistic distribution for $\pi_\phi$ which is assumed to belong to a Gaussian family 

\begin{equation}\label{policy}
    \pi_\phi(x) \in \{\mathcal{N}(\mu(x), \Sigma(x)) \},
\end{equation}

\noindent where the mean $\mu$ and covariance matrix $\Sigma$ are functions of the state. The SAC objective adds an entropy term $\mathcal{H}\left(\pi\left(\cdot \mid \mathbf{x}\right)\right)$ into the objective function to learn a stochastic policy that promotes exploration 

\begin{equation}\label{obj_entropy}
\begin{split}
& J_\gamma(x_0, \pi)= \\
& \sum_{k=0}^\infty \gamma^k \mathbb{E}_{\left(\mathbf{x}_k, \mathbf{a}_k\right) \sim \rho_\pi}\left[r\left(\mathbf{x}_k, \mathbf{a}_k\right)+\alpha \mathcal{H}\left(\pi\left(\cdot \mid \mathbf{x}_k\right)\right)\right],
\end{split}
\end{equation}

\noindent where $\alpha$ is called the \textit{temperature} parameter which controls the stochasticity of the policy and $\rho_\pi$ is the trajectory distribution of the policy $\pi_\phi$. Furthermore, the SAC algorithm estimates the value function $V^{\pi_\phi}_\psi(x)$, action-value function $Q^{\pi_\phi}_\theta(x, a)$ and learns an actor function $\pi_\phi(x)$ which are all approximated with neural networks. The value network is trained to minimize the squared residual error 

\begin{equation}\label{Value_loss}
J_V(\psi)=\mathbb{E}_{x_k \sim \mathcal{D}}\left[\frac{1}{2}\left(V_\psi\left(x_k\right) - \hat{V}\right)^2\right],
\end{equation}

\noindent where $\mathcal{D}$ is the distribution of previously sampled states and actions, or a replay buffer with the target value network

\begin{equation}
    \hat{V}(x_k) = \mathbb{E}_{a_k \sim \pi_\phi}\left[Q_\theta\left(x_k, a_k\right)-\log \pi_\phi\left(a_k \mid x_k\right)\right].
\end{equation}

\noindent The soft Q function is trained to minimize the soft Bellman residual as its loss function 

\begin{equation}
\begin{split}
&J_Q(\theta)= \\
&\mathbb{E}_{\left(x_k, a_k\right) \sim \mathcal{D}}\left[\frac{1}{2}\left(Q_\theta\left(x_k, a_k\right)-\hat{Q}\left(x_k, a_k\right)\right)^2\right],
\end{split}
\end{equation}

\noindent with the target action-value network

\begin{equation} 
\hat{Q}\left(x_k, a_k\right)=r\left(x_k, a_k\right)+\gamma \mathbb{E}_{\mathbf{x}_{k+1} \sim p}\left[V_{\bar{\psi}}\left(x_{k+1}\right)\right].
\end{equation}

\noindent where $p$ is the distribution of the next state $x_{k+1}$. Finally, we train the policy network to directly minimize the following expected KL divergence

\begin{equation} \label{actor_KL}
\begin{split}
&J_\pi(\phi)= \\
&\mathbb{E}_{x_k \sim \mathcal{D}}\left[\mathrm{D}_{\mathrm{KL}}\left(\pi_\phi\left(\cdot \mid x_k\right) \| \frac{\exp \left(Q_\theta\left(x_k, \cdot\right)\right)}{Z_\theta\left(\mathbf{x}_k\right)}\right)\right] .
\end{split}
\end{equation}

\noindent The intuition behind this formula is that higher values of $Q_\theta(x_n, u_n)$ signal higher cumulative rewards so the policy will be proportional to the current estimate of the soft action-value function. A more computable form for (\ref{actor_KL}) is the following

\begin{equation} \label{actor_loss}
\mathbb{E}_{x \sim \mathcal{D}}\left[\mathbb{E}_{a \sim \pi_{\phi(\cdot \mid x)}}\left[\log \pi_\phi(a \mid x)-Q_\theta(x, a)\right]\right].
\end{equation}

\section{Reinforcement Learning and Stability}
In this section we describe the RL algorithm for learning the policy $\pi$ as well simultaneously learning local linear dynamics for achieving stability. In particular, we focus on the Soft Actor Critic algorithm for our purposes, but the formulation with any reinforcement learning algorithm is equivalent. Since we focus on stability, the reward function we implement is set to 
\begin{equation} \label{reward}
    r(x_n, u_n) = R - x^T Q x
\end{equation}
where $R$ is a positive constant and $Q$ is a positive definite matrix. The design of the reward function will encourage the agent to get close to the equilibrium point to maximize the cumulative rewards. We could also penalize the actions as this will not affect any of the theoretical results derived in the following section, however since we already consider constraints we omit this penalization to simplify the training process. Additionally, we suppose that the reward function satisfies the \textit{terminal condition} on the domain $D$.
\begin{dnt}[Terminal Condition]
    The reward function is said to satisfy the terminal condition on a domain $D$ for some $c \geq 0$ if 
    \begin{equation} \label{terminal}
    r(x_N, u_N) = -c, 
    \end{equation}
    where $N$ is the first iteration such that $x_N \notin D$ and
    \begin{equation*}
        r(x_n, u_n) = 0 \; \text{ for all } n > N.
    \end{equation*}
\end{dnt}

\noindent We design the reward/environment to satisfy the terminal condition to ensure that the agent collects data near the equilibrium point. We now describe the components of the algorithm. For the full details please refer to Algorithm 1 in the Appendix. \newline 

\noindent \textbf{Step 1}. As a first step prior to solving the DARE we need to model the local linearization at the equilibrium point. While the reinforcement learning algorithm is running we collect data points whenever the agent is within a fixed $\eta > 0$ distance from the origin. We learn a state space representation 
\begin{equation} \label{learned_linear_dynamics}
    x_{n+1} = \hat{A} x_n + \hat{B} u_n
\end{equation}
by collecting data points for $x_n$ and $x_{n+1}$ whenever $x_n \in B_{\eta}(0)$ in two different ways. Until a preset threshold has been reached, take action $u_n = 0$ and insert $x_n$ and $x_{n+1}$ as row vectors in the data matrices $X$ and $Y$ respectively. Then learn $\hat{A}$ as the least squares estimate 
\begin{equation} \label{learnA}
\hat{A} = \arg \min _A\|Y-A X\|_2.
\end{equation}
Then once this is completed, we similarly collect data matrices $X, Y$ and $U$ where $U$ stores the non-zero actions and learn B from 
\begin{equation} \label{learnB}
\hat{B} = \arg \min _B\|Y-\hat{A} X - B U\|_2.
\end{equation}

\noindent \textbf{Step 2}. After learning a representation for ($\ref{learned_linear_dynamics}$) we calculate the gain matrix $K$ by solving the DARE. \newline

\noindent \textbf{Step 3}. Then we update the empirical loss function of (\ref{actor_loss}) as
\begin{equation} \label{gain_actor_loss}
    \begin{split}
        \sum_{x_m} \sum_{a_k} \left( \log \pi_\phi(a_k \mid x_m)-Q_\theta(x_m, a_k) \right) \\
        + \| \nabla \mu(\pi_\phi) - K \|_2 + \|\mu(\pi_\phi)(0) \|
    \end{split}
\end{equation}
where $\mu(\pi_\phi) : \mathbb{R}^n \to \mathbb{R}$ maps the state to the mean of the policy and $\nabla \mu(\pi_\phi)$ is the Jacobian of this function with respect to the state $x$. By minimizing this loss, the policy will learn to be locally stabilizing with its linear approximation close to the optimal gain matrix $K$. 
\newline

\noindent \textbf{Step 4}. After training has completed, set 
\begin{equation}
    \pi_\phi(x) \leftarrow \pi_\phi(x) - \pi_\phi(0),
\end{equation}
as this will ensure that $\pi_\phi(0) = 0$ and that the origin is a stable equilibrium point. \newline 

\noindent \textbf{Step 5}. When evaluating the policy for stabilization tasks, the action taken should be the mean of the policy, that is we should set 
\begin{equation}
    \pi_\phi(x) \leftarrow \mu(\pi_\phi(x))
\end{equation}
as a stochastic policy will only stabilize with high probability should the variance decrease to zero about the equilibrium point. \newline 

\begin{figure*}[ht!]
  \centering
  \includegraphics[scale=0.50]{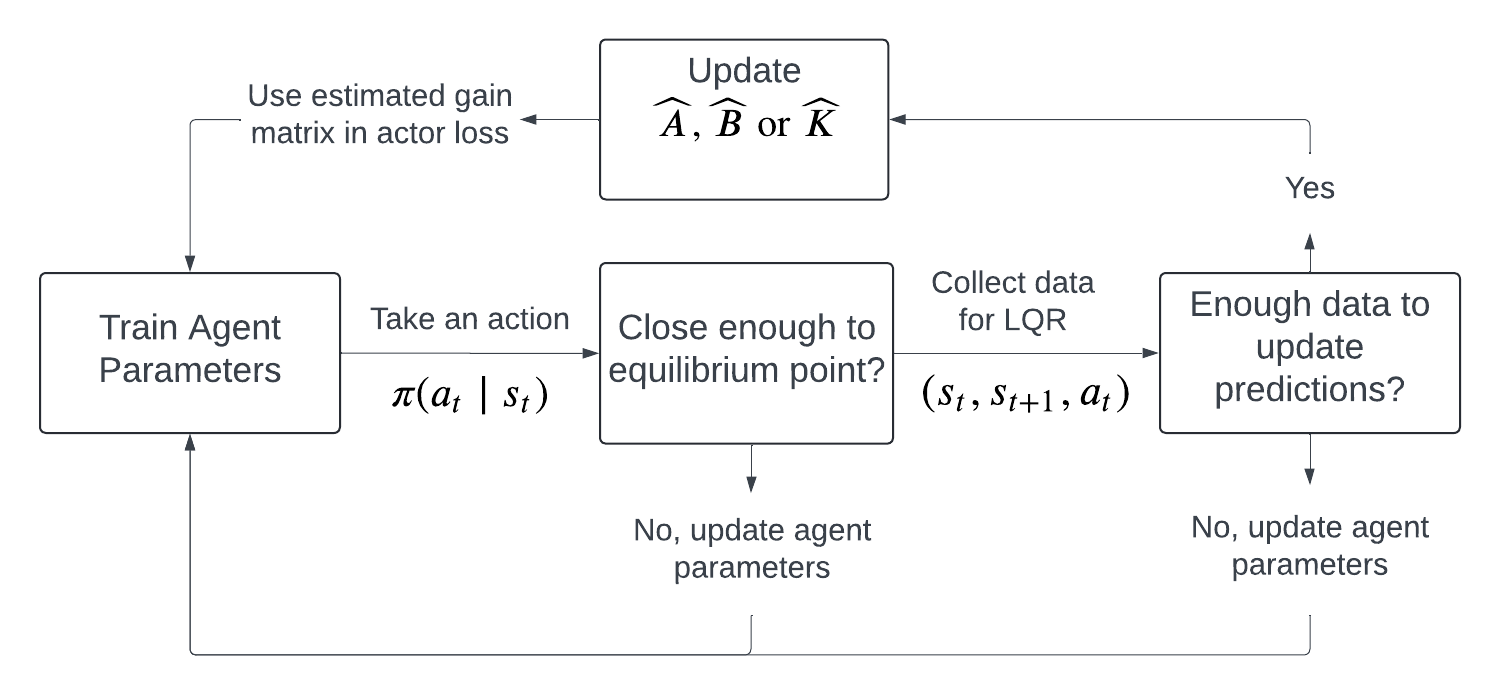}
  \caption{Overall learning algorithm of the proposed method.}
  \label{learning_algorithm}
\end{figure*}

\noindent We found that the training was more efficient when collecting data while simultaneously training all the parameterized functions according to (\ref{Value_loss}) - (\ref{actor_loss}) rather than first learning the local dynamics and updating the actor parameters according to (\ref{gain_actor_loss}). This is because if knowledge of local dynamics is used, most states will be close to the equilibrium point and the performance will not generalize well outside a neighbourhood of the equilibrium point.

\subsection{Enforcing Action Bounds}
Since the Gaussian distribution has infinite support, actions need to be mapped to a finite domain. To address this, we use a squashing function that allows for more general actions. In particular, supposing that $\mathbf{w}$ is a random variable with support $\mathbb{R}^d$, the random variable 
\begin{equation}\label{squashing}
    \mathbf{y} = A \tanh(\mathbf{w}) + b    
\end{equation}
where $A$ is an invertible matrix, $b$ is a vector and the hyperbolic tangent, $\tanh$, is applied element-wise has a support over a parallelepiped in $\mathbb{R}^d$. We will apply  the change of variables formula to compute the log likelihoods of the bounded actions. Let $\mathbf{w}$ be a random variable with Gaussian distribution $\mu(\mathbf{w}|x)$ and $\mathbf{y} = A \tanh(\mathbf{w}) + b$. Then writing $\mathbf{z} = \tanh(\mathbf{w})$ we have that $J_y = A$ and $J_z = \text{diag}(1 - \tanh^2(w_1), \hdots, 1 - \tanh^2(w_d))$, where $J_y$ is the Jacobian of $\mathbf{y}$ with respect to $\mathbf{z}$ and $J_z$ is the Jacobian of $z$ with respect to $x$. From the change of variables formula,
\begin{align*}
    \pi(y|x) &= \mu(\mathbf{w}|x) \cdot |\det(A)|^{-1} |\det J_z|^{-1}
\end{align*}
and in the case when $A$ is a diagonal matrix, the log likelihood has a convenient form
\begin{equation}
    \pi(y|x) = \frac{\mu(\mathbf{w}|x)}{\prod_{i = 1}^n a_{ii}(1 - \tanh^2(w_i))},
\end{equation}
and the log-likelihood is given by 
\begin{equation}
    \log \pi(y|x) = \log \mu(\mathbf{w}|x) - \sum_{i=1}^n \log  (a_{ii}(1 - \tanh^2(w_i))).
\end{equation}

\section{Theoretical Guarantees}
Recall the system (\ref{dyn}):
\begin{align*}
    x_{n+1} = f(x_n, u_n),
\end{align*}with state $x \in \mathbb{R}^n$ and input $\mathcal{U}(x) \subset \mathbb{R}^m$ where $\mathcal{U}(x)$ is a set of admissible controls (as in \cite{Keerthi1985-py}) associated to a state $x$. Since stabilization under discounting in the discrete setting is much more challenging than without a discount factor, we make the following assumptions.

\begin{assumption}\label{Assumption_optimal}
    (a) For any initial condition $x_0 \in \mathbb{R}^n$ there exists an infinite length control system $\mathbf{u}^\star$ such that 
    \begin{equation} \label{value_function}
         V_\gamma(x) = \mathbf{J}^{\mathbf{u}^\star}_\gamma(x_0) = \sup_{\mathbf{u}} \mathbf{J}^{\mathbf{u}}_\gamma(x).
    \end{equation}
    (b) Suppose that there exists a constant $a_V$ such that for any $\gamma \in (0, 1)$ and $x \in \mathbb{R}^n$
    \begin{equation}\label{Assumption_linearvalue}
        \frac{R}{1 - \gamma} - V_\gamma(x) \leq a_V \|x \|^2.
    \end{equation}
\end{assumption}

\noindent For (a), this assumption states that for any initial condition $x_0 \in \mathbb{R}^n$ the optimal value function in Definition \ref{optimal_value} exists. Conditions on the vector field in (\ref{dyn}) and the rewards to ensure that Assumption~\ref{Assumption_optimal} holds are given in \cite{Keerthi1985-py}. For (b) a sufficient condition is for the optimal policy to be exponentially stabilizable. That is, there exists $M > 0$ and $\lambda > 0$ such that for each $x \in \mathbb{R}^n$ there exists an infinite length control sequence $\mathbf{u}(x)$ that satisfies $R - r(\Psi(k, x, \mathbf{u}(k)), \mathbf{u}(k)) \leq M \|x \|^2 e^{-\lambda_d}k$. For weaker conditions that guarantee condition (b) please refer to \cite{Grimm2005-cm}.

\begin{assumption}\label{Assumption_policy}
    Suppose that the policy in (\ref{policy}) is parameterized by a neural network with smooth activation functions. 
\end{assumption}

\noindent This assumption is necessary since in the proofs we need to approximate the value function with a neural policy and it always holds in our implementation as we use the hyperbolic tangent activation. Furthermore, as we shall see in the next section, reinforcement learning methods typically guarantee practical stability in the discrete-time with discounting setting and therefore, the following assumption will allow for asymptotic stability.

\begin{assumption} \label{Assumption_LQR}
    Let $K$ be the optimal gain matrix of the solution to the local LQR problem (\ref{LQR_K}). Suppose that for any $\epsilon_K > 0$, our learning algorithm converges to an estimate of the optimal gain matrix $\hat{K}$ such that $\| K - \hat{K} \|_2 < \epsilon_K$.
\end{assumption}

\noindent This assumption is reasonable because, in practice, learning the linear state-space representation from data is often straightforward, sometimes even achievable with a single trajectory \cite{hu2022sample}. Additionally, we can also empirically test whether the learned gain matrix provides stabilization within a neighborhood around the origin.

\subsection{Deterministic Guarantees}
We first formulate theoretical guarantees in the deterministic setting and then prove theoretical guarantees for our learning algorithm. This is because we consider stochasticity in the our learning algorithm algorithm for exploration, but test the effectiveness of the policy by taking the expectation. The following proposition establishes the value function is pointwise bounded and that it strictly increases along its optimal trajectories.

\begin{pro}\label{Proposition_bounds}
There exists comparison functions $\alpha_1, \alpha_2 \in \mathcal{K}_\infty$ and $\gamma^\star$ such that for any $\gamma$ satisfying $\gamma \in (\gamma^\star, 1)$
the following holds
\begin{equation}\label{bound1}
    \alpha_1(x) \leq \frac{R}{1 - \gamma} - V_\gamma(x) \leq \alpha_2(x)
\end{equation}
for any  $x \in \mathbb{R}^n$, and 
\begin{equation}\label{bound2}
    V_\gamma(x) - V_\gamma(y) \leq - c \| x \|^2
\end{equation}
for any $x \in \mathbb{R}^n$, and $y = f(x, \mathbf{u}_1(x))$.
\end{pro}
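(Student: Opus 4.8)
The plan is to reformulate both claims in terms of the \emph{co-value} function $W_\gamma(x) := \frac{R}{1-\gamma} - V_\gamma(x)$. Since $r(x,u) = R - x^{T}Qx \leq R$, peeling the constant geometric series $\frac{R}{1-\gamma}$ off every $\mathbf{J}^{\mathbf{u}}_\gamma$ shows $W_\gamma(x) = \inf_{\mathbf{u}} \sum_{k=0}^{\infty}\gamma^{k}\,\Psi(k,x,\mathbf{u}(k))^{T}Q\,\Psi(k,x,\mathbf{u}(k)) \geq 0$. Under this substitution, (\ref{bound1}) is a Lyapunov-type sandwich for $W_\gamma$, and (\ref{bound2}) is the one-step decrease $W_\gamma(y) - W_\gamma(x) \leq -c\|x\|^{2}$ along the optimal trajectory; this is the usual route to stability of the optimal closed loop.

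For (\ref{bound1}) the upper estimate is immediate: Assumption~\ref{Assumption_optimal}(b) is literally $W_\gamma(x) \leq a_V\|x\|^{2}$, so take $\alpha_2(s):=a_V s^{2}\in\mathcal{K}_\infty$. For the lower estimate I would note that the $k=0$ summand in the defining infimum equals $x^{T}Qx$ independently of the control, whence $W_\gamma(x)\geq x^{T}Qx\geq\lambda_{\min}(Q)\|x\|^{2}$, and $\alpha_1(s):=\lambda_{\min}(Q)s^{2}$ is in $\mathcal{K}_\infty$ because $Q$ is positive definite, so $\lambda_{\min}(Q)>0$ (comparing the two estimates also forces $a_V\geq\lambda_{\min}(Q)$, which will matter below). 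Both bounds hold for every $\gamma\in(0,1)$; the threshold $\gamma^\star$ enters only in (\ref{bound2}).

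For (\ref{bound2}) I would first invoke the discounted dynamic programming equation $V_\gamma(x)=\sup_{u\in\mathcal{U}(x)}\{r(x,u)+\gamma V_\gamma(f(x,u))\}$, with the supremum attained at $u=\mathbf{u}_1(x)$, the first control of an optimal sequence furnished by Assumption~\ref{Assumption_optimal}(a); in co-value form this reads $W_\gamma(x)=x^{T}Qx+\gamma W_\gamma(y)$ with $y=f(x,\mathbf{u}_1(x))$. Subtracting $V_\gamma(y)$ from the $V_\gamma$-form and using $(1-\gamma)V_\gamma(y)=R-(1-\gamma)W_\gamma(y)$ gives $V_\gamma(x)-V_\gamma(y)=-x^{T}Qx+(1-\gamma)W_\gamma(y)$. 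The co-value recursion yields $\gamma W_\gamma(y)=W_\gamma(x)-x^{T}Qx\leq W_\gamma(x)\leq a_V\|x\|^{2}$, hence $(1-\gamma)W_\gamma(y)\leq\tfrac{1-\gamma}{\gamma}a_V\|x\|^{2}$, so $V_\gamma(x)-V_\gamma(y)\leq-\bigl(\lambda_{\min}(Q)-\tfrac{1-\gamma}{\gamma}a_V\bigr)\|x\|^{2}$. Choosing $\gamma^\star:=a_V/(\lambda_{\min}(Q)+a_V)\in(0,1)$ makes the parenthesized quantity a strictly positive constant $c$ for all $\gamma\in(\gamma^\star,1)$ (one may shrink the interval a bit more to get a $c$ independent of $\gamma$), which is (\ref{bound2}). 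The case $x=0$ is handled separately and trivially, since $f(0,0)=0$ forces $\mathbf{u}_1(0)=0$ and $y=0$.

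The only classically nontrivial ingredient is the deterministic dynamic programming principle legitimizing the Bellman equality and the attainment of the maximizing first action; given $r\leq R$ and the existence of optimal sequences (Assumption~\ref{Assumption_optimal}(a)) this is standard and can be cited from \cite{Keerthi1985-py} or \cite{Bertsekas2012-ot}. The genuine delicacy I expect is uniformity in $\gamma$: the one-step decrease argument unavoidably leaves the error term $(1-\gamma)W_\gamma(y)$, which must not blow up as $\gamma\uparrow1$; that is exactly why Assumption~\ref{Assumption_optimal}(b) was stated as a bound uniform in $\gamma$, and why (\ref{bound2}) needs $\gamma$ above a threshold rather than all of $(0,1)$. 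The remaining manipulations are routine.
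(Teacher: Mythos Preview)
Your proposal is correct and follows essentially the same route as the paper: the co-value $W_\gamma$ is exactly the paper's $L_\gamma=\frac{R}{1-\gamma}-V_\gamma$, the sandwich $\alpha_1(s)=\lambda_{\min}(Q)s^2$, $\alpha_2(s)=a_Vs^2$ and the Bellman-based decrease estimate $V_\gamma(x)-V_\gamma(y)\le -(\lambda_{\min}(Q)-\tfrac{1-\gamma}{\gamma}a_V)\|x\|^2$ with threshold $\gamma^\star=a_V/(a_V+\lambda_{\min}(Q))$ all match the paper verbatim. Your derivation of the bound $(1-\gamma)W_\gamma(y)\le\tfrac{1-\gamma}{\gamma}a_V\|x\|^2$ via the co-value recursion $\gamma W_\gamma(y)=W_\gamma(x)-x^TQx\le a_V\|x\|^2$ is in fact a bit more explicit than the paper's write-up of the same step.
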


\begin{rem}
    In the above proposition we have that $\alpha_1(s) = \lambda_{\min}(Q) s^2$ and $\alpha_2(s) = a_V s^2$.   
\end{rem}

\noindent The proposition asserts that for a sufficiently large discount factor, the value function can from part of a Lyapunov function. The proof of the proposition can be found in the Appendix. Consequently, this proposition can be used to show that the optimal policy is stabilizing. 

\begin{thm}
    Suppose that $\pi^\star$ is the optimal policy for the objective (\ref{objective}). Then there exists $\gamma^\star$ such that for any $\gamma \in (\gamma^\star, 1)$ the system 
    \begin{equation}
        x_{n+1} = f(X_n, \pi^\star(x_n))
    \end{equation}
    is asymptotically stable.
\end{thm}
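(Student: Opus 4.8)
\noindent The plan is to recognize $W_\gamma(x) := \frac{R}{1-\gamma} - V_\gamma(x)$ as a Lyapunov function for the closed-loop system $x_{k+1} = f(x_k, \pi^\star(x_k))$ and then apply the standard discrete-time Lyapunov theorem. Proposition~\ref{Proposition_bounds} already supplies the two ingredients needed for $\gamma \in (\gamma^\star,1)$: the sandwich $\alpha_1(\|x\|) \leq W_\gamma(x) \leq \alpha_2(\|x\|)$ with $\alpha_1,\alpha_2 \in \mathcal{K}_\infty$, which makes $W_\gamma$ positive definite ($W_\gamma(0)=0$ and $W_\gamma(x)>0$ for $x\neq 0$) and radially unbounded; and the one-step decrease $V_\gamma(x) - V_\gamma(y) \leq -c\|x\|^2$ with $c>0$ along $y = f(x,\mathbf{u}_1(x))$.

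\noindent The first step is to identify the closed-loop successor $f(x,\pi^\star(x))$ with the state $y$ appearing in \eqref{bound2}. By Assumption~\ref{Assumption_optimal}(a) the supremum in Definition~\ref{optimal_value} is attained, and in the deterministic discounted setting the principle of optimality yields the Bellman equation $V_\gamma(x) = r(x,\pi^\star(x)) + \gamma V_\gamma(f(x,\pi^\star(x)))$ for a stationary optimal policy whose action at $x$ is the first control $\mathbf{u}_1(x)$ of an optimal sequence from $x$. Hence, along closed-loop trajectories, $W_\gamma(x_{k+1}) - W_\gamma(x_k) = V_\gamma(x_k) - V_\gamma(x_{k+1}) \leq -c\|x_k\|^2 =: -\alpha_3(\|x_k\|)$ with $\alpha_3 \in \mathcal{K}_\infty$, so $W_\gamma$ is strictly decreasing along every nonzero trajectory.

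\noindent The remaining step is the routine Lyapunov bookkeeping. Since $W_\gamma$ is non-increasing along trajectories, any sublevel set $\Omega_\ell := \{x : W_\gamma(x) \leq \ell\}$ with $\Omega_\ell \subseteq D$ is forward invariant, so $x_k$ is defined for all $k$ once $x_0 \in \Omega_\ell$; this produces the constant $\rho$ in the attractivity definition. For Lyapunov stability, the chain $\alpha_1(\|x_k\|) \leq W_\gamma(x_k) \leq W_\gamma(x_0) \leq \alpha_2(\|x_0\|)$ gives $\|x_k\| \leq \alpha_1^{-1}(\alpha_2(\|x_0\|))$, so $\delta := \alpha_2^{-1}(\alpha_1(\epsilon))$ works for a given $\epsilon>0$. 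For attractivity, $(W_\gamma(x_k))_k$ is non-increasing and bounded below by $0$, hence convergent, and telescoping the decrease inequality gives $\sum_{k \geq 0}\alpha_3(\|x_k\|) \leq W_\gamma(x_0) < \infty$, so $\alpha_3(\|x_k\|) \to 0$ and therefore $\|x_k\| \to 0$ because $\alpha_3 \in \mathcal{K}$. Combining these yields asymptotic stability on $\Omega_\ell$, and taking $\gamma^\star$ to be the threshold from Proposition~\ref{Proposition_bounds} finishes the argument.

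\noindent The part I expect to be the main obstacle is the first step: rigorously equating the closed-loop trajectory generated by the stationary policy $\pi^\star$ with the ``optimal trajectory'' along which \eqref{bound2} is asserted. This requires the optimum in Definition~\ref{optimal_value} to be achieved by a stationary Markov policy together with the validity of Bellman's equation; both follow from Assumption~\ref{Assumption_optimal}(a) and standard dynamic-programming theory (cf.\ the conditions in \cite{Keerthi1985-py}), but some care is needed since Proposition~\ref{Proposition_bounds} phrases the decrease in terms of an open-loop optimal control sequence rather than a feedback law. A secondary technical point is the restriction to $D$: one must work inside a sublevel set of $W_\gamma$ contained in $D$ so that trajectories remain where $f$, $V_\gamma$, and the estimates of Proposition~\ref{Proposition_bounds} are available, which is precisely why the stability obtained is local in the sense of the stated definition.
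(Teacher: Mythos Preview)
Your proposal is correct and follows essentially the same route as the paper: both take $W_\gamma(x)=\frac{R}{1-\gamma}-V_\gamma(x)$ as a Lyapunov function and feed in the sandwich and one-step-decrease estimates from Proposition~\ref{Proposition_bounds}. The only difference is in the final bookkeeping: you run the generic $\mathcal{K}_\infty$ Lyapunov argument (sublevel-set invariance plus telescoping $\sum_k \alpha_3(\|x_k\|)<\infty$), whereas the paper exploits the explicitly quadratic form of the bounds recorded in Remark~2 ($\alpha_1(s)=\lambda_{\min}(Q)s^2$, $\alpha_2(s)=a_V s^2$) to turn the decrease into a geometric contraction $W_\gamma(x_{k+1})\leq(1-c/a_V)\,W_\gamma(x_k)$ and hence the explicit estimate $\|x_k\|^2\leq \frac{a_V}{\lambda_{\min}(Q)}\,e^{-\lambda k}\|x_0\|^2$. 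Your version is more robust to the shape of $\alpha_1,\alpha_2$; the paper's is shorter and yields exponential (not merely asymptotic) stability as a bonus, which it notes in the remark following the proof. The caveats you flag---identifying the closed-loop successor with the $y$ of \eqref{bound2} and confinement to $D$---are legitimate but are glossed over in the paper's proof as well.
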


\noindent The proof of the theorem can be found in the Appendix. On the other hand, given that the class of policies is parameterized by neural networks, we have to restrict ourselves to these class of functions. However, due to approximation errors, only \textit{practical stability} in the sense of the following theorem is guaranteed.

\begin{thm} \label{thm_nnet_stab}
    Fix $\Delta > 0$ and let $\pi^\star$ is the optimal policy for the objective (\ref{objective}). Then, there exists $\epsilon > 0$, $\gamma^\star$ and $N \in \mathbb{N}$ such that for any $\gamma \in (\gamma^\star, 1)$ and any neural policy $\|\pi_\phi - \pi^\star \|_\infty < \epsilon$ such that the following holds 
    \begin{equation}
        \| \Psi(k, x, \pi_{\phi, k}(x)) \| \leq  \Delta \quad \forall k \geq N.
    \end{equation}
    for any $x \in K$.
\end{thm}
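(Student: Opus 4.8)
The plan is to exhibit $W_\gamma(x):=\frac{R}{1-\gamma}-V_\gamma(x)$ as a Lyapunov function for the optimal closed loop, then show that it remains a \emph{perturbed} Lyapunov function for any policy $\pi_\phi$ that is uniformly $\epsilon$-close to $\pi^\star$, which yields practical stability with a residual ball of radius $O(\sqrt{\epsilon})$. By Proposition~\ref{Proposition_bounds}, for $\gamma\in(\gamma^\star,1)$ we have $\alpha_1(\|x\|)\le W_\gamma(x)\le\alpha_2(\|x\|)$ with $\alpha_1(s)=\lambda_{\min}(Q)s^2$, $\alpha_2(s)=a_Vs^2$, and $W_\gamma(f(x,\pi^\star(x)))-W_\gamma(x)\le -c\|x\|^2$; note $W_\gamma\ge 0$ since $r\le R$, and the $\alpha_i$ and $c$ do not depend on $\gamma$. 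First I would fix a compact set $\mathcal{C}$ slightly larger than $\{x:\|x\|\le\alpha_1^{-1}(\alpha_2(\sup_{z\in D}\|z\|))\}$, which (being $\gamma$-independent) contains every $W_\gamma$-sublevel set that meets $D$, together with its $L_u\epsilon$-neighborhood.

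Second, the technical core: I would establish that $W_\gamma$ admits a modulus of continuity on $\mathcal{C}$ that is \emph{uniform in} $\gamma\in(\gamma^\star,1)$, i.e.\ there is $\omega\in\mathcal{K}$, independent of $\gamma$, with $|W_\gamma(a)-W_\gamma(b)|\le\omega(\|a-b\|)$ for $a,b\in\mathcal{C}$. Granting this, Assumption~\ref{Assumption_Lipschitz} gives $\|f(x,\pi_\phi(x))-f(x,\pi^\star(x))\|\le L_u\|\pi_\phi(x)-\pi^\star(x)\|\le L_u\epsilon$, so on $\mathcal{C}$
\[
W_\gamma\bigl(f(x,\pi_\phi(x))\bigr)\ \le\ W_\gamma\bigl(f(x,\pi^\star(x))\bigr)+\omega(L_u\epsilon)\ \le\ W_\gamma(x)-c\|x\|^2+\omega(L_u\epsilon).
\]
Writing $b:=\omega(L_u\epsilon)$ and $r:=\sqrt{2b/c}$, this means $W_\gamma$ drops by at least $b$ per step whenever $\|x\|\ge r$.

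Third, I would run the standard invariance argument. Since $W_\gamma\ge 0$ cannot keep decreasing by $b$, any trajectory started in $D$ enters $B_r(0)$ within $N:=\lceil\alpha_2(\sup_{z\in D}\|z\|)/b\rceil$ steps, a bound that depends only on $D,c,\omega,L_u,\epsilon$ and not on $x$ or $\gamma$. Setting $\Omega:=\{x:W_\gamma(x)\le\alpha_2(r)+b\}$, a two-case check — inside $B_r$ where $W_\gamma$ rises by at most $b$ above the level $\alpha_2(r)$, and in $\Omega\setminus B_r$ where it strictly drops — shows $\Omega$ is positively invariant under $\pi_\phi$. Hence $x_k\in\Omega$ for all $k\ge N$, giving $\|x_k\|\le\alpha_1^{-1}(\alpha_2(r)+b)$. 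Since $\alpha_2(r)+b=(2a_V/c+1)\,\omega(L_u\epsilon)\to 0$ as $\epsilon\to 0$ and $\alpha_1^{-1}$ is continuous with $\alpha_1^{-1}(0)=0$, I pick $\epsilon$ small enough that $\alpha_1^{-1}(\alpha_2(r)+b)\le\Delta$, which proves the claim with this $N$ (and $K=D$ in the statement).

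The main obstacle is the uniform modulus of continuity for $V_\gamma$ in the second step. The naive estimate, propagating a perturbed initial condition of an optimal trajectory through the open-loop Lipschitz constant $L_x$ of $f$, produces a factor $(1-\gamma L_x)^{-1}$, which is worthless when $L_x\ge 1$ and $\gamma\uparrow 1$; one must instead use the exponential-stabilizability hypothesis behind Assumption~\ref{Assumption_optimal}(b), under which the optimal closed loop contracts near the origin, so that the sensitivity of the discounted return to the initial state is bounded uniformly in $\gamma$. I would also carry out the boundary bookkeeping by a short induction showing the perturbed trajectory never leaves $\mathcal{C}$ (so that $\omega$ and the decrease inequality stay applicable), enlarging $\gamma^\star$ if the uniform modulus requires $\gamma$ to be sufficiently close to $1$.
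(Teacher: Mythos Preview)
Your plan is sound and takes a genuinely different route from the paper. You keep the \emph{optimal} Lyapunov function $W_\gamma=R/(1-\gamma)-V_\gamma$ fixed and treat the $\pi_\phi$ closed loop as a perturbation, so the decrease survives up to an additive $\omega(L_u\epsilon)$; the ISS-style trapping argument in your third step (enter $B_r$, then remain in the sublevel set $\Omega$) is standard and correct. The paper instead \emph{changes} the Lyapunov function to $L_\gamma^{\pi_\phi}:=R/(1-\gamma)-V_\gamma^{\pi_\phi}$, the shifted value of the \emph{perturbed} policy: since $V_\gamma^{\pi_\phi}$ satisfies its own Bellman identity under $\pi_\phi$, one obtains an exact multiplicative decrease $L_\gamma^{\pi_\phi}(y)\le(1-\tfrac{c}{2a_V})L_\gamma^{\pi_\phi}(x)$ on $D\setminus B_\delta(0)$, provided $L_\gamma^{\pi_\phi}$ inherits the quadratic sandwich from $L_\gamma$. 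That transfer is the content of the paper's key Lemma~\ref{value_approximation} (continuity of the value function in the \emph{policy}), proved by iterating the two Bellman equations and Taylor-expanding in the control variable. What your route buys is a clean classical perturbation picture with explicit residual-ball constants; what the paper's buys is that no additive error appears in the decrease, so one simply contracts into $B_\delta$ and then invokes the one-step Lipschitz bound $\|f(x,\pi_\phi(x))\|\le L_x\delta+L_uM_\delta\le\Delta$ without any sublevel-set bookkeeping. Your flag on the uniform-in-$\gamma$ modulus is well placed: extracting it from exponential stabilizability requires a short-time (Lipschitz growth of $\|\Psi_x-\Psi_y\|$) versus long-time (both trajectories decay) split, which in general gives only a H\"older $\omega$; the paper's Lemma~\ref{value_approximation} replaces state-continuity of $V_\gamma$ by $C^2$ bounds on $V_\gamma^{\pi_\phi}$ over the compact set together with a geometric series in $\gamma$---a different packaging of essentially the same regularity question.
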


\begin{rem}
    An alternative formulation for practical stability is presented in the Appendix. The proof of this result relies heavily on a smooth version of the universal approximation theorem.
\end{rem}

\noindent The notion of stability is called \textit{practical} since we are only guaranteed stability in a small neighborhood of the origin. To ensure asymptotic stability our algorithm will learn a local gain matrix that guarantees local stability. This result will be formulated in the following section.


\subsection{Stochastic Guarantees} 
In this section, we prove stability and convergence guarantees for our learning algorithm. As a first result we show that the temperature parameter in (\ref{obj_entropy}) can be controlled such that the optimal solution to the SAC objective is close in expectation to the solution of the optimal deterministic policy (\ref{objective}). 
\begin{pro} \label{Proposition_temp}
    Fix $\epsilon > 0$ and consider the system (\ref{dyn}) and the SAC objective. Denote the optimal solution to (\ref{obj_entropy}) as $\pi^{\text{SAC}}$ and the the optimal solution to (\ref{objective}) as $\pi^\star$. Then there exists $\alpha^\star$ such that for any temperature $0 < \alpha < \alpha^\star$, we have
    \begin{equation}
        |\mathbb{E}\pi^{\text{SAC}} - \pi^\star |_D < \epsilon.
    \end{equation}
\end{pro}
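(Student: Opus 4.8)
\emph{Setup.} The plan is a quantitative vanishing-regularisation argument; throughout, $|\mathbb{E}\pi^{\text{SAC}}-\pi^\star|_D$ means $\sup_{x\in D}\|\mathbb{E}_{a\sim\pi^{\text{SAC}}(\cdot\mid x)}[a]-\pi^\star(x)\|$. Decompose the SAC objective (\ref{obj_entropy}) as $J_\gamma(x_0,\pi)=J^0_\gamma(x_0,\pi)+\alpha\,H_\gamma(x_0,\pi)$, where $J^0_\gamma$ collects the $\gamma$-discounted expected rewards and $H_\gamma(x_0,\pi)=\sum_{k\ge 0}\gamma^k\,\mathbb{E}_{\rho_\pi}[\mathcal{H}(\pi(\cdot\mid\mathbf{x}_k))]$. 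Since (\ref{squashing}) confines actions to a bounded set $\mathcal{Y}$ and $r$ is continuous on the compact set $D$, there are constants $r_{\max}$ and $h_{\max}:=\log\mathrm{vol}(\mathcal{Y})$ with $|r|\le r_{\max}$ and $\mathcal{H}(\pi(\cdot\mid x))\le h_{\max}$ for every admissible $\pi$ and $x\in D$ (the differential entropy on $\mathcal{Y}$ is maximised by the uniform law); hence $|J^0_\gamma|\le r_{\max}/(1-\gamma)$ and $H_\gamma\le h_{\max}/(1-\gamma)$ uniformly, and only these upper bounds are used. I will also use a uniform strict-optimality hypothesis on the optimal action-value function $Q^\star$ — not stated in the proposition, and the point flagged below — namely that $Q^\star(x,a)\le Q^\star(x,\pi^\star(x))-\kappa\|a-\pi^\star(x)\|^2$ on $\mathcal{Y}$ with a matching local lower bound near $\pi^\star(x)$, uniformly in $x\in D$; this in particular makes $x\mapsto\pi^\star(x)$ continuous.

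\emph{Step 1 (near-optimal return).} Fix $\delta>0$. Every realised trajectory of a stochastic policy is a particular deterministic control sequence, so $J^0_\gamma(x_0,\pi^{\text{SAC}})\le V_\gamma(x_0)$. For the reverse bound I would test $J_\gamma$ against the Gaussian mollification $\pi^\star_\sigma$ of $\pi^\star$ (pre-squash mean $\pi^\star$, covariance $\sigma^2 I$): by Lipschitz dependence of the flow map on the control (Assumption~\ref{Assumption_Lipschitz}), bounded convergence, and continuity of $\pi^\star$, $J^0_\gamma(\cdot,\pi^\star_\sigma)\to V_\gamma$ as $\sigma\downarrow 0$, uniformly on $D$ by compactness; pick $\sigma$ with $J^0_\gamma(\cdot,\pi^\star_\sigma)\ge V_\gamma-\delta/2$ on $D$. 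For that frozen $\sigma$, $H_\gamma(\cdot,\pi^\star_\sigma)\ge-C_\sigma$ uniformly on $D$, so optimality of $\pi^{\text{SAC}}$ for $J_\gamma$ gives
\begin{align*}
V_\gamma(x_0)\ge J^0_\gamma(x_0,\pi^{\text{SAC}})&\ge J^0_\gamma(x_0,\pi^\star_\sigma)+\alpha H_\gamma(x_0,\pi^\star_\sigma)-\alpha H_\gamma(x_0,\pi^{\text{SAC}})\\
&\ge V_\gamma(x_0)-\tfrac{\delta}{2}-\alpha\Big(C_\sigma+\tfrac{h_{\max}}{1-\gamma}\Big).
\end{align*}
Taking $\alpha$ small forces $J^0_\gamma(\cdot,\pi^{\text{SAC}})\ge V_\gamma-\delta$ on $D$; the same sandwich applied to the soft value gives $\|V_{\mathrm{soft}}-V_\gamma\|_D\to 0$, hence $\sup_{x\in D,\,a\in\mathcal{Y}}|Q_{\mathrm{soft}}(x,a)-Q^\star(x,a)|\to 0$ as $\alpha\downarrow 0$, with $\pi^\star(x)\in\arg\max_a Q^\star(x,a)$ by the Bellman principle.

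\emph{Step 2 (near-optimal policy).} I would then use the closed form of the entropy-regularised optimum, $\pi^{\text{SAC}}(\cdot\mid x)\propto\exp(Q_{\mathrm{soft}}(x,\cdot)/\alpha)$. Combining the standing curvature hypothesis with $\|Q_{\mathrm{soft}}-Q^\star\|_\infty<\eta$ in a Laplace estimate bounds $\mathbb{P}_{a\sim\pi^{\text{SAC}}(\cdot\mid x)}(\|a-\pi^\star(x)\|>\epsilon/2)$ by $c(\rho)\exp((-\kappa\epsilon^2/4+\kappa'\rho^2+2\eta)/\alpha)$; choosing $\rho$, then $\alpha$ (hence $\eta$), small drives this to $0$ uniformly on $D$. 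Since $a\in\mathcal{Y}$,
\[
\|\mathbb{E}_{a\sim\pi^{\text{SAC}}(\cdot\mid x)}[a]-\pi^\star(x)\|\le\tfrac{\epsilon}{2}+\mathrm{diam}(\mathcal{Y})\,\mathbb{P}_{a\sim\pi^{\text{SAC}}(\cdot\mid x)}\big(\|a-\pi^\star(x)\|>\tfrac{\epsilon}{2}\big)<\epsilon
\]
for all $\alpha$ below some $\alpha^\star$, which is the assertion. (If $\pi^{\text{SAC}}$ is instead the best squashed-Gaussian policy rather than the exact Gibbs optimum, one also notes that near its maximum $Q^\star(x,\cdot)$ is essentially quadratic, so its Gibbs law is to leading order a squashed Gaussian captured by that family, and that the smooth bounded diffeomorphism $w\mapsto A\tanh(w)+b$ commutes with all limits, so the policy mean is read as $\mathbb{E}[A\tanh(\mathbf{w})+b]$.)

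\emph{Main obstacle.} The setup and Step~1 are routine given Assumption~\ref{Assumption_Lipschitz} and compactness of $D$; the crux is the uniform strict-optimality hypothesis feeding Step~2, because near-optimality of the \emph{return} alone does not force closeness of the \emph{policy} — a flat ridge of near-optimal actions would keep $\mathbb{E}[\pi^{\text{SAC}}]$ a fixed distance from $\pi^\star$ as $\alpha\downarrow 0$. I would either (i) adopt that hypothesis, which is natural here since near the equilibrium the LQR structure makes $Q^\star(x,\cdot)$ strictly concave in the action with Hessian tied to $R+B^TPB\succ 0$ (full column rank of $B$), and controllability of the linearisation propagates strictness over $D$; or (ii) weaken the statement to convergence of the return, or of the policy in an $L^1$ sense along optimal trajectories, which Step~1 already delivers. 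A secondary point is that $\pi^\star_\sigma$ must keep trajectories inside $D$ with probability close enough to $1$ that the terminal penalty $-c$ of (\ref{terminal}) is negligible; this also follows from Lipschitz continuity of the flow for $\sigma$ small, provided the optimal trajectory does not hug $\partial D$.
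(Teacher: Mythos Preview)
Your proposal is correct in outline and considerably more careful than the paper's own argument, but it takes a genuinely different route.

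The paper's proof is short: it writes out $J_\gamma^{\pi^{\text{SAC}}}$ using the closed-form Gaussian entropy $\tfrac12\ln(2\pi e\,\sigma(x_k)^2)$, bounds the entropy contribution by $\tfrac{\alpha}{2(1-\gamma)}\ln(2\pi e\,\sigma_D^2)$ with $\sigma_D^2=\sup_{x\in D}\sigma(x)^2$, notes that $\pi^\star$ maximises the reward-only part, and then simply invokes Assumption~\ref{Assumption_optimal}(a) to conclude $\|\mathbb{E}\pi_\phi^{\text{SAC}}-\pi^\star\|_D\to 0$ as $\alpha\to 0$. There is no mollified comparison policy, no soft-$Q$ convergence, and no Laplace estimate.

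What you do differently, and what each buys: your Step~1 (testing against $\pi^\star_\sigma$) gives a uniform-on-$D$ sandwich on the return that the paper does not spell out; your Step~2 actually confronts the passage from ``near-optimal return'' to ``near-optimal policy,'' which is the issue you correctly flag as the main obstacle. The paper's proof does \emph{not} address that passage --- Assumption~\ref{Assumption_optimal}(a) only guarantees existence of a maximiser, not uniqueness or any argmax-stability --- so the very gap you identify (a flat ridge of near-optimal actions would block the conclusion) is left open in the paper. Your curvature hypothesis $Q^\star(x,a)\le Q^\star(x,\pi^\star(x))-\kappa\|a-\pi^\star(x)\|^2$ is exactly the kind of extra structure needed to close it, and your remark that near the equilibrium this is forced by the LQR Hessian $R+B^TPB\succ 0$ is the right justification in this setting. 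In short: the paper's approach is shorter because it leaves implicit the step you take pains to make explicit; your approach is longer but actually delivers what is claimed, at the price of an additional (reasonable) hypothesis.
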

\noindent The value of $\alpha^\star$ is included in the proof as presented in the Appendix. Since in actual implementation, $\mathbb{E}\pi^{\text{SAC}}$ is parameterized by a neural network, by Theorem \ref{thm_nnet_stab} we can guarantee practical stability for a near optimal neural policy.   

\begin{cor}\label{cor_practical}
    Suppose that $\pi^{\text{SAC}}$ is the optimal policy to (\ref{obj_entropy}). Then for any $\Delta > 0$, there exists $\epsilon, \gamma^\star, \alpha^\star > 0$, and $N \in \mathbb{N}$ such that for any $\alpha \in (0, \alpha^\star)$, $\gamma \in (\gamma^\star, 1)$ and any neural policy $\| \mathbb{E} \left[\pi^{\text{SAC}} - \pi_{\phi}{^{\text{SAC}}} \right] \|_D < \epsilon$, the following holds 
    \begin{equation}
        \| \Psi(k, x, \pi_{\phi, k}^{\text{SAC}}(x)) \| \leq  \Delta \quad \forall k \geq N.
    \end{equation}
    for any $x \in D$.
\end{cor}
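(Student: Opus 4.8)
The plan is to derive Corollary~\ref{cor_practical} as a direct composition of Proposition~\ref{Proposition_temp} and Theorem~\ref{thm_nnet_stab}, using the triangle inequality to relate the neural SAC policy to the optimal deterministic policy $\pi^\star$. First I would fix $\Delta > 0$. Apply Theorem~\ref{thm_nnet_stab} with this $\Delta$ to obtain constants $\epsilon_0 > 0$, $\gamma^\star$, and $N \in \mathbb{N}$ such that any neural policy with $\|\pi_\phi - \pi^\star\|_\infty < \epsilon_0$ produces trajectories satisfying $\|\Psi(k,x,\pi_{\phi,k}(x))\| \le \Delta$ for all $k \ge N$ and all $x \in D$. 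Then I would invoke Proposition~\ref{Proposition_temp} with tolerance $\epsilon_0/2$ to obtain $\alpha^\star$ such that for any $\alpha \in (0,\alpha^\star)$ the optimal SAC policy satisfies $\|\mathbb{E}\pi^{\text{SAC}} - \pi^\star\|_D < \epsilon_0/2$.

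Next I would set $\epsilon = \epsilon_0/2$ in the statement and take any neural policy with $\|\mathbb{E}[\pi^{\text{SAC}} - \pi_\phi^{\text{SAC}}]\|_D < \epsilon$. By the triangle inequality,
\begin{equation*}
    \|\mathbb{E}\pi_\phi^{\text{SAC}} - \pi^\star\|_D \le \|\mathbb{E}[\pi^{\text{SAC}} - \pi_\phi^{\text{SAC}}]\|_D + \|\mathbb{E}\pi^{\text{SAC}} - \pi^\star\|_D < \tfrac{\epsilon_0}{2} + \tfrac{\epsilon_0}{2} = \epsilon_0.
\end{equation*}
Since the closed-loop evaluation uses the mean action $\mathbb{E}\pi_\phi^{\text{SAC}}$ (as specified in Step~5 of the algorithm), the deterministic policy that actually generates the trajectory $\Psi(k,x,\pi_{\phi,k}^{\text{SAC}}(x))$ is within $\epsilon_0$ of $\pi^\star$ in the relevant norm on $D$, so Theorem~\ref{thm_nnet_stab} applies and yields $\|\Psi(k,x,\pi_{\phi,k}^{\text{SAC}}(x))\| \le \Delta$ for all $k \ge N$ and all $x \in D$.

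The main subtlety — and the step I would be most careful about — is reconciling the norms: Theorem~\ref{thm_nnet_stab} is stated with the $\infty$-norm $\|\cdot\|_\infty$ over the compact set $K$, while Proposition~\ref{Proposition_temp} and the corollary hypothesis use $\|\cdot\|_D$; I would note that on the compact, convex domain $D$ these are the same sup-norm (with $K$ identified with $D$), so the chaining is legitimate. A second point requiring a remark is that Proposition~\ref{Proposition_temp} compares $\mathbb{E}\pi^{\text{SAC}}$ to $\pi^\star$ as \emph{functions of the state}, which is exactly what is needed to treat $\mathbb{E}\pi_\phi^{\text{SAC}}$ as a deterministic feedback law feeding into the flow map $\Psi$; one should also observe that the $\gamma^\star$ produced by Theorem~\ref{thm_nnet_stab} and any $\gamma$-dependence hidden in Proposition~\ref{Proposition_temp} are compatible (take the larger threshold if necessary). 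Beyond these bookkeeping matters, no new estimates are needed — the result is essentially a packaging of the two earlier results, so I would keep the proof short.
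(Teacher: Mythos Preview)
Your proposal is correct and follows essentially the same approach as the paper's own proof: invoke Theorem~\ref{thm_nnet_stab} to get a tolerance around $\pi^\star$, use Proposition~\ref{Proposition_temp} to bring $\mathbb{E}\pi^{\text{SAC}}$ within half that tolerance, and then apply the triangle inequality. The paper's version is terser (it writes the Theorem~\ref{thm_nnet_stab} threshold as $2\epsilon$ and skips the explicit triangle inequality), but the logic is identical, and your extra remarks on norm compatibility and the deterministic interpretation of $\mathbb{E}\pi_\phi^{\text{SAC}}$ are reasonable clarifications the paper leaves implicit.
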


\noindent However, if the learned neural policy satisfies a linear order of approximation with sufficiently small decay factor, it is possible to extend practical stability to asymptotic stability.

\begin{pro}
    Suppose that $\pi^\star$ is the optimal policy to (\ref{objective}) and $\pi^{\text{SAC}}$ is the optimal policy to (\ref{obj_entropy}). Then, there exists $\epsilon, \gamma^\star, \alpha^\star > 0$ such that if the neural policy that satisfies the approximation error $\| \mathbb{E} \left[\pi^{\text{SAC}} - \pi_{\phi}{^{\text{SAC}}} \right] \|_D < \epsilon$ also satisfies the order of approximation
    \begin{equation} \label{linear_order}
        | \mathbb{E} \left[\pi^\star(x) - \pi_{\phi}{^{\text{SAC}}}(x) \right] | \leq \eta \|x\|
    \end{equation}
    for any decay rate $\eta$ sufficiently small, then for any $\alpha \in (0, \alpha^\star)$ and $\gamma \in (\gamma^\star, 1)$, the neural policy $\mathbb{E} \left[\pi_{\phi}{^{\text{SAC}}} \right]$ is asymptotically stabilizing.
\end{pro}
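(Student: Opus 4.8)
The plan is to combine the practical-stability result of Corollary~\ref{cor_practical} with a local argument showing that, near the origin, the linear order of approximation in \eqref{linear_order} forces the closed loop to behave like the LQR-stabilized linearization, thereby upgrading attractivity-to-a-neighborhood into genuine asymptotic stability. First I would fix $\Delta > 0$ small (to be chosen at the end) and invoke Corollary~\ref{cor_practical} to obtain $\epsilon, \gamma^\star, \alpha^\star, N$ such that for $\alpha \in (0,\alpha^\star)$, $\gamma \in (\gamma^\star,1)$ and any neural policy with $\|\mathbb{E}[\pi^{\text{SAC}} - \pi_\phi^{\text{SAC}}]\|_D < \epsilon$ every trajectory enters $B_\Delta(0)$ and stays there for all $k \geq N$. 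This handles the ``global-to-local'' phase; what remains is to show that once the trajectory is in $B_\Delta(0)$, the policy $\mathbb{E}[\pi_\phi^{\text{SAC}}]$ drives it to the origin.

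For the local phase I would first control the deviation of the neural policy from the linear feedback $-Kx$. By Proposition~\ref{Proposition_temp}, $\|\mathbb{E}\pi^{\text{SAC}} - \pi^\star\|_D < \epsilon'$ can be made arbitrarily small by shrinking $\alpha$; combining this with \eqref{linear_order} and the fact (from Step~4 of the algorithm and the feasibility analysis) that $\pi^\star$ agrees to first order with the LQR controller $-Kx$ near the origin, I would derive an estimate of the form $|\mathbb{E}[\pi_\phi^{\text{SAC}}](x) + Kx| \leq (\eta + c_1\epsilon')\|x\| + o(\|x\|)$ on $B_\Delta(0)$, where the $o(\|x\|)$ term collects the higher-order remainder of $\pi^\star$'s Taylor expansion. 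Writing the closed-loop dynamics as
\begin{equation*}
    x_{k+1} = f(x_k, \mathbb{E}[\pi_\phi^{\text{SAC}}](x_k)) = (A - BK)x_k + g(x_k),
\end{equation*}
where $g$ gathers the linearization error of $f$ (which is $o(\|x\|)$ by differentiability, or $O(\|x\|^2)$ under Assumption~\ref{Assumption_Lipschitz} with a $C^1$ vector field) together with the $B$-weighted policy deviation $B(\mathbb{E}[\pi_\phi^{\text{SAC}}](x) + Kx)$, I would use that $A - BK$ is Schur-stable (since $K$ solves the DARE) to pick $P \succ 0$ solving the discrete Lyapunov equation $(A-BK)^T P (A-BK) - P = -I$. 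Then $W(x) = x^T P x$ is a local Lyapunov candidate, and a standard computation gives $W(x_{k+1}) - W(x_k) \leq -\|x_k\|^2 + C(\eta + \epsilon' + \|x_k\|)\|x_k\|^2$ for some constant $C$ depending on $\|P\|$, $\|B\|$, and the Lipschitz data. Choosing $\eta$ sufficiently small, $\alpha$ (hence $\epsilon'$) sufficiently small, and $\Delta$ sufficiently small so that $C(\eta + \epsilon' + \Delta) < \tfrac12$, we get $W(x_{k+1}) - W(x_k) \leq -\tfrac12\|x_k\|^2 < 0$ on $B_\Delta(0)\setminus\{0\}$, which yields both Lyapunov stability and local attractivity on $B_\Delta(0)$; combined with the practical-stability guarantee that all trajectories reach $B_\Delta(0)$, this gives asymptotic stability in the sense of the definition.

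The main obstacle I expect is the interplay of the three small parameters $\eta$, $\epsilon'$ (equivalently $\alpha$), and $\Delta$: the radius $\Delta$ delivered by Corollary~\ref{cor_practical} depends on the approximation error $\epsilon$, which in turn must be compatible with the $\epsilon'$ coming from Proposition~\ref{Proposition_temp}, while the Lyapunov decrease requires $\Delta$ to be small relative to $1/C$ — so one must check that the order of quantifiers can actually be arranged without circularity (pick $\Delta$ first from the desired decrease, then $\eta$ and $\alpha$, then read off $\epsilon$, $\gamma^\star$ from the corollary at that $\Delta$). A secondary technical point is justifying that the remainder in $\pi^\star(x) = -Kx + o(\|x\|)$ — i.e. that the optimal policy is differentiable at the origin with the LQR gain as its linearization — which should follow from Proposition~\ref{Proposition_bounds} (the value function is a local Lyapunov function with quadratic bounds) together with the standard fact that the discounted optimal control of a smooth system inherits the LQR feedback to first order as $\gamma \to 1$; if a fully rigorous argument is wanted here one may instead absorb this term directly into \eqref{linear_order} by noting the hypothesis already bounds $\mathbb{E}[\pi^\star - \pi_\phi^{\text{SAC}}]$ linearly, and separately invoke Step~4 ($\pi_\phi(0)=0$) plus the gradient-matching penalty in \eqref{gain_actor_loss} to pin the linear part of $\pi_\phi^{\text{SAC}}$ near $-K$.
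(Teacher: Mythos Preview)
Your two-phase strategy (practical stability into $B_\Delta(0)$, then a local LQR-based Lyapunov argument) is genuinely different from the paper's proof, and more elaborate than necessary. The paper does \emph{not} pass through the linearization $(A-BK)$ at all. Instead it uses the global Lyapunov function $L_\gamma(x) = \frac{R}{1-\gamma} - V_\gamma(x)$ for the optimal policy $\pi^\star$ (already established in Proposition~\ref{Proposition_bounds} and Theorem~1) and treats the neural-policy closed loop as a perturbation $y = f(x,\pi^\star(x)) + g(x)$ with $g(x) = f(x,\pi_\phi^{\text{SAC}}(x)) - f(x,\pi^\star(x))$. The mean value theorem gives $L_\gamma(y) - L_\gamma(x) \leq \|\nabla L_\gamma\|\,\|g(x)\| - c\|x\|^2$, and the hypothesis \eqref{linear_order} together with Assumption~\ref{Assumption_Lipschitz} bounds $\|g(x)\| \leq L_u\eta\|x\|$, so for $\eta$ small the quadratic term dominates and $L_\gamma$ decreases along \emph{all} trajectories in $D$. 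This yields asymptotic stability in one step, with no separate local analysis and no parameter-ordering issue.

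The gap in your route is precisely the one you flag: you need $\pi^\star(x) = -Kx + o(\|x\|)$, i.e.\ that the optimal policy of the discounted nonlinear problem has the LQR gain as its linearization at the origin. Nothing in the paper's hypotheses or prior results establishes this (differentiability of $\pi^\star$ at the origin is not even asserted), and your proposed fixes---appealing to Step~4 or the loss \eqref{gain_actor_loss}---invoke algorithmic design choices that are not among the hypotheses of Proposition~3. The cleaner observation is that \eqref{linear_order} already compares $\pi_\phi^{\text{SAC}}$ to $\pi^\star$ directly, and $L_\gamma$ is already a Lyapunov function for $\pi^\star$; going through $-Kx$ is a detour that introduces an unproved intermediate claim. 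Your approach could be repaired by replacing the LQR Lyapunov function $W$ with $L_\gamma$ itself and running the perturbation bound globally---but that is exactly the paper's argument.
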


\noindent These two previous results form the bridge between stability and reinforcement learning. The ultimate goal of stabilizing can be realized by learning a policy that approximates the optimal policy well. However, due to randomness and optimization errors that are typically intractable, in practice, reinforcement learning will only manage to learn a practically stabilizing policy as in Corollary \ref{cor_practical}. This highlights the importance of our learning algorithm. \newline 

\noindent Next we analyze the convergence of our learning algorithm. However, since the convergence of maximal entropy reinforcement learning algorithms is only guaranteed in the tabular case \cite{Haarnoja2018-ha} and the continuous case is still a challenging open problem, we make the assumption that the SAC algorithm can learn a neural policy that is arbitrarily close to the optimal one.

\begin{assumption}\label{Assumption_actorloss}
    Suppose that $\pi^{\text{SAC}}$ is the optimal policy to the SAC objective ($\ref{obj_entropy}$), then for any $\epsilon > 0$ there exists a neural policy $\pi_\phi^{\text{SAC}}$ such that $\| \mathbb{E} \left[\pi^{\text{SAC}} - \pi_{\phi}^{\text{SAC}} \right] \|_D < \epsilon$.
\end{assumption}

\noindent Indeed, this seems to hold in practice as evidenced in the following section. The neural policy learned by SAC will reach and stay in a small neighborhood around the equilibrium point. Furthermore, the robustness of the LQR optimal gain matrix asserts a sufficient close gain matrix will be stabilizing. Therefore, combining the practical stability of the near optimal RL policy and the local stabilization properties of learned gain matrix will show that our learning algorithm will converge to an asymptotically stabilizing neural policy.
\begin{thm}\label{thm_converge}
    Suppose that Assumptions \ref{Assumption_LQR} and \ref{Assumption_actorloss} hold for our learning algorithm. Then the learning algorithm converges to a asymptotically stabilizing neural policy $\mathbb{E} \left[\pi_{\phi}{^{\text{SAC}}} \right]$.
\end{thm}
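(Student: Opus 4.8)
The plan is to chain together the guarantees already assembled in this section. By Assumption \ref{Assumption_actorloss}, SAC produces in expectation a neural policy arbitrarily close to the optimal entropy-regularized policy $\pi^{\text{SAC}}$; by Proposition \ref{Proposition_temp} the temperature can be driven down so that $\mathbb{E}\pi^{\text{SAC}}$ is in turn arbitrarily close to the optimal deterministic policy $\pi^\star$ on $D$; and by Assumption \ref{Assumption_LQR} the least-squares/DARE loop of Steps 1--2 returns a gain estimate $\hat{K}$ with $\|K-\hat{K}\|_2$ as small as desired. The heart of the argument is to show that these three facts, together with the augmented actor loss (\ref{gain_actor_loss}) and the recentering of Step 4, force the learned mean policy to satisfy simultaneously the practical-stability bound of Corollary \ref{cor_practical} and the linear order of approximation (\ref{linear_order}); the proposition immediately preceding Assumption \ref{Assumption_actorloss} then upgrades practical stability to asymptotic stability.

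\noindent \textbf{Step 1: practical stability of the learned policy.} Fix a target radius $\Delta>0$. Using Proposition \ref{Proposition_temp}, choose $\alpha^\star$ and $\alpha\in(0,\alpha^\star)$ so that $\|\mathbb{E}\pi^{\text{SAC}}-\pi^\star\|_D$ is small, and using Assumption \ref{Assumption_actorloss} let the training run yield $\|\mathbb{E}[\pi^{\text{SAC}}-\pi_\phi^{\text{SAC}}]\|_D<\epsilon$. Corollary \ref{cor_practical} then supplies $\gamma^\star<1$ and $N\in\mathbb{N}$ such that every trajectory of the closed loop under $\mathbb{E}[\pi_\phi^{\text{SAC}}]$ started in $D$ enters the ball of radius $\Delta$ by time $N$ and remains there. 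Since $\Delta$ is free, we may later shrink it to lie inside any prescribed neighborhood of the origin.

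\noindent \textbf{Step 2: local linear behavior and the order of approximation.} By Assumption \ref{Assumption_policy} the mean map $\mu(\pi_\phi)$ is smooth, so $\mathbb{E}[\pi_\phi^{\text{SAC}}(x)]=\mu(\pi_\phi)(0)+\nabla\mu(\pi_\phi)(0)\,x+o(\|x\|)$. Step 4 forces $\mu(\pi_\phi)(0)=0$, and driving the penalty $\|\nabla\mu(\pi_\phi)-K\|_2+\|\mu(\pi_\phi)(0)\|$ in (\ref{gain_actor_loss}) to zero while $\hat{K}\to K$ along the DARE loop gives $\|\nabla\mu(\pi_\phi)(0)-\hat{K}\|_2<\epsilon_K$ for any prescribed $\epsilon_K>0$. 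On the other hand, the quadratic sandwich $\alpha_1(\|x\|)\le \tfrac{R}{1-\gamma}-V_\gamma(x)\le\alpha_2(\|x\|)$ of Proposition \ref{Proposition_bounds} (with $\alpha_1,\alpha_2$ quadratic) shows that the quadratic part of $\tfrac{R}{1-\gamma}-V_\gamma$ solves the discounted DARE, so $\pi^\star(x)=-Kx+o(\|x\|)$ near the origin. Subtracting, $|\mathbb{E}[\pi^\star(x)-\pi_\phi^{\text{SAC}}(x)]|\le\|K-\hat{K}\|_2\,\|x\|+o(\|x\|)\le\eta\|x\|$ on a small ball, with $\eta$ proportional to $\epsilon_K$ and hence arbitrarily small; this is exactly hypothesis (\ref{linear_order}). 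With the bounds of Steps 1 and 2 both in force, the proposition immediately preceding Assumption \ref{Assumption_actorloss} applies for $\alpha\in(0,\alpha^\star)$ and $\gamma\in(\gamma^\star,1)$ and concludes that $\mathbb{E}[\pi_\phi^{\text{SAC}}]$ is asymptotically stabilizing, which is the claim.

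\noindent \textbf{Main obstacle.} The delicate step is the interface between the two notions of stability: Corollary \ref{cor_practical} only guarantees entry into the $\Delta$-ball, whereas the learned gain only stabilizes inside the basin of attraction of the linearized closed loop $x_{k+1}=(A-B\hat{K})x_k$. One must therefore choose $\Delta$ small enough that this ball lies inside that basin and control the $o(\|x\|)$ remainders in Step 2 uniformly over it, so that the perturbed nonlinear feedback $-\hat{K}x+o(\|x\|)$ still contracts there — a robustness argument for the DARE solution in the spirit of Assumption \ref{Assumption_LQR}. One must also reconcile the $\gamma$-discounted gain appearing in $\pi^\star$ with the undiscounted gain $\hat{K}$ produced by Step 2, which is legitimate precisely because $\gamma$ is taken close to $1$ and the discounted DARE solution converges to the undiscounted one as $\gamma\to 1$. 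Finally, one should argue that a minimizer of (\ref{gain_actor_loss}) genuinely realizes $\nabla\mu(\pi_\phi)(0)\approx\hat{K}$ rather than merely making the composite loss small in aggregate; this is where Assumption \ref{Assumption_actorloss}, read as an assumption on the attained minimizer, carries the weight.
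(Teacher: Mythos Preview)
Your Step 1 matches the paper: Assumption \ref{Assumption_actorloss} plus Corollary \ref{cor_practical} drive the trajectory into an arbitrarily small ball. The divergence is in how you certify asymptotic stability once inside.

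You route through Proposition 3 (the linear-order proposition), trying to verify hypothesis (\ref{linear_order}) and then invoke that result. There are two genuine gaps in this route. First, the proof of Proposition 3 applies the Lyapunov decrease $L_\gamma(y)-L_\gamma(x)<0$ for all $x\in D\setminus\{0\}$, so (\ref{linear_order}) must hold on all of $D$, not merely on a small ball. Your Taylor-expansion argument only yields $|\mathbb{E}[\pi^\star(x)-\pi_\phi^{\text{SAC}}(x)]|\le\eta\|x\|$ near the origin; away from the origin the $o(\|x\|)$ remainders are uncontrolled, and nothing in the training loss forces global linear growth of the policy error. Second, your identification $\pi^\star(x)=-Kx+o(\|x\|)$ is asserted from the quadratic sandwich in Proposition \ref{Proposition_bounds}, but that sandwich only bounds $L_\gamma$ between two quadratics; it does not say $L_\gamma$ is itself quadratic to leading order, much less that its quadratic part solves the (undiscounted) DARE. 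You flag the discounted/undiscounted mismatch yourself in the obstacle paragraph, but the more basic step---that the linearization of $\pi^\star$ at the origin is $-K$---is never established in the paper and is not a consequence of the stated assumptions.

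The paper avoids both problems by not invoking Proposition 3 at all. It argues directly on the closed loop $x_{k+1}=f(x_k,\pi_\phi^{\text{SAC}}(x_k))$: the Jacobian at the origin is $A+B\hat{K}$, where $\hat{K}=\nabla\mu(\pi_\phi)(0)$ is forced close to the true LQR gain $K$ by Assumption \ref{Assumption_LQR} together with the penalty in (\ref{gain_actor_loss}); continuity of eigenvalues then makes $A+B\hat{K}$ Hurwitz, and the Hartman--Grobman theorem gives a $\delta$-ball on which the nonlinear closed loop is locally asymptotically stable. Practical stability from Step 1 (with $\Delta\le\delta$) funnels every initial state in $D$ into that ball. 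This route never needs to know the linearization of $\pi^\star$, and it never needs a global version of (\ref{linear_order}). Your ``Main obstacle'' paragraph in fact sketches exactly this interface argument; that paragraph is the correct proof, and your Steps 1--2 should be rewritten around it rather than around Proposition 3.
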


\begin{figure*}[ht!]
    \centering
    \begin{subfigure}[b]{0.45\textwidth}
        \centering
        \includegraphics[width=\textwidth]{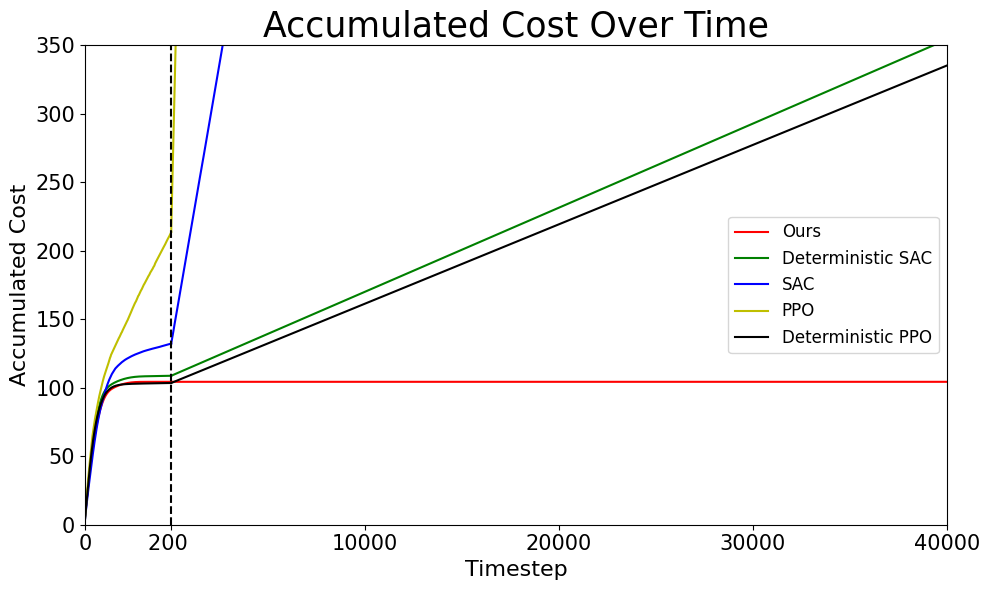}
        \caption{Inverted Pendulum}
        \label{fig:1a}
    \end{subfigure}
    \hfill
    \begin{subfigure}[b]{0.45\textwidth}
        \centering
        \includegraphics[width=\textwidth]{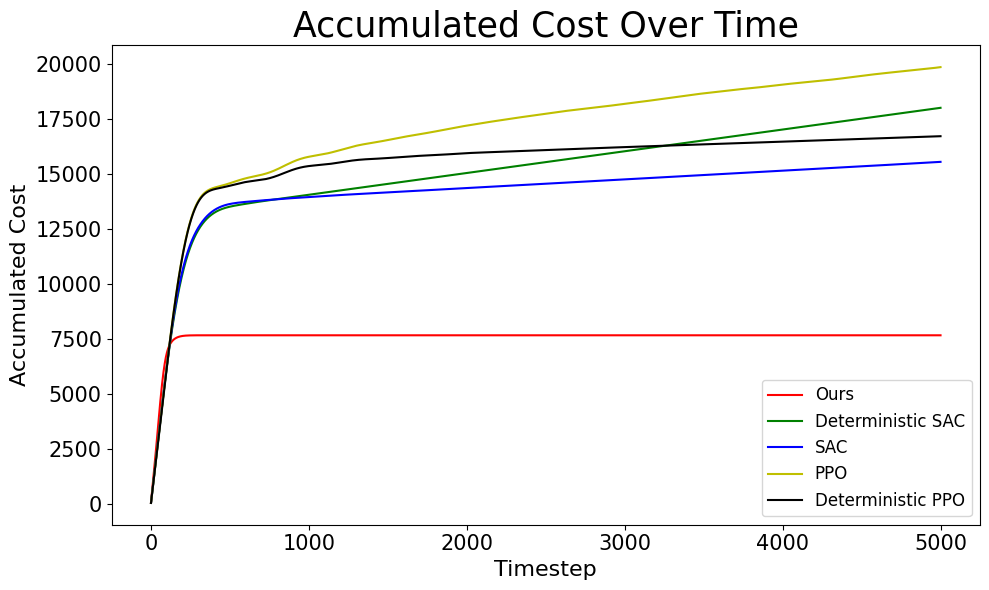}
        \caption{Cartpole}
        \label{fig:1b}
    \end{subfigure}
    
    \vspace{0.31cm}

    \begin{subfigure}[b]{0.45\textwidth}
        \centering
        \includegraphics[width=\textwidth]{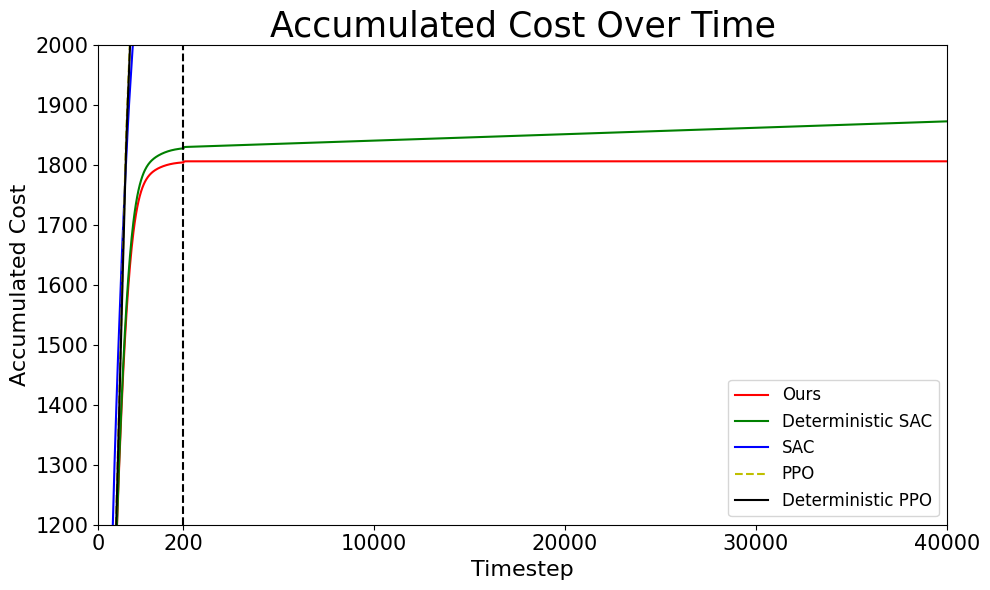}
        \caption{2D Quadrotor}
        \label{fig:1c}
    \end{subfigure}
    \hfill
    \begin{subfigure}[b]{0.45\textwidth}
        \centering
        \includegraphics[width=\textwidth]{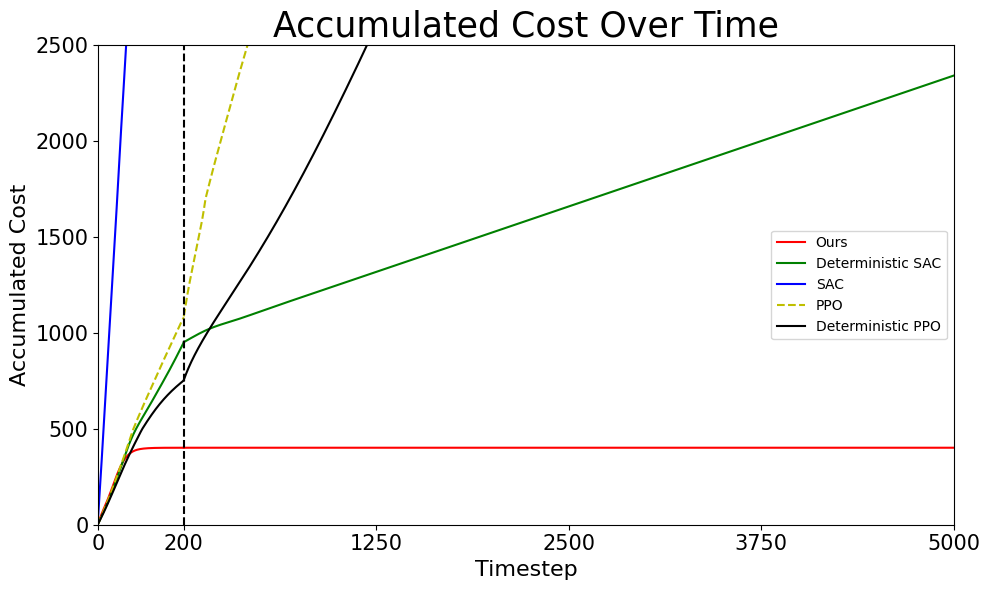}
        \caption{3D Quadrotor}
        \label{fig:1d}
    \end{subfigure}
    
    \caption{Plots of Accumulated Costs for the Four Environments}
    \label{fig_costs}
\end{figure*}

\section{Numerical Experiments}
In this section, we present numerical examples to evaluate the performance of the proposed learning algorithm. In these simulations we aim to accomplish the following goals: 1) Evaluate the performance of our learning algorithm against stochastic and deterministic versions of SAC and PPO, 2) Evaluate whether our learning algorithm can achieve good performance on high-dimensional control problems while still stabilizing and satisfying the constraints, 3) Demonstrate formal verification of the asymptotic stability for low dimensional systems with the learned neural policy. 

\subsection{Comparison with Reinforcement learning methods}
We compare our learning algorithm against deterministic and stochastic SAC and PPO. The ultimate goal of this task is to learn an asymptotically stabilizing policy. Therefore, we compare the accumulated cost over time where the cost is measured by the quadratic form $x^TQx + u^TRu$ where $Q$ and $R$ are the same matrices used to learn the optimal gain matrix for the LQR controller. This is different from the typical comparisons in RL which average the performance of the learning algorithm over the episodic training. For every environment, we evaluate the performance by uniformly sampling the initial states and test the aforementioned policies. All hyperparameters, network architectures, and costs are listed in the Appendix and the code for all comparisons are provided in the supplementary material. \newline

\noindent In general, stochastic policies perform poorly, whereas PPO performs well on low-dimensional tasks. Deterministic SAC shows strong performance across all tasks, as it can reach and maintain a position within a small neighborhood of the origin, indicating that the policy can achieve practical stability. However, for true stabilization, only our learning algorithm successfully completes this task. The plots for stabilization are shown in the Appendix. As a final statement, it is important to emphasize that our control tasks are more difficult than the typical optimal control tasks as we consider constraints, whereas standard optimal control methods such as LQR cannot handle constraints. Additionally, our algorithm is able to stabilize on larger domains as compared to the recent work in \cite{Meng2024PhysicsInformedNN}.

\subsection{Verification for Asymptotic Stability}
When the sampling time is small, the learned neural policy $\pi_\phi$ should be able to stabilize the continuous-time dynamics. In this work, we verify that the origin of the closed-loop continuous-time system $\dot{x} = f(x, \pi_\phi(x))$ is asymptotically stable within its domain for low dimensional systems such as the inverted pendulum. This verification is done using a quadratic Lyapunov function and the LyZNet toolbox~\cite{liu2024tool} built upon SMT solvers. The quadratic Lyapunov function is generated by linearizing the system and then solving the Lyapunov equation. With the quadratic Lyapunov function, we can first find a local region-of-attraction (ROA) where linearization dominates. Next, a reachability step is utilized to enlarge the ROA using the Lyapunov stability analysis for the quadratic Lyapunov function with respect to the local ROA. In doing so, all the states in the domain are guaranteed to eventually converge the origin, yielding asymptotic stability.
The verified results with the quadratic Lyapunov function and corresponding ROA estimate can be found in the Appendix.

\section{Conclusion}
In this work, we propose an algorithm for learning a stabilizing controller for unknown nonlinear systems. By integrating the local optimal gain matrix directly into the neural policy we design a policy that satisfies classical local LQR stabilization and numerically shows global stability. This approach significantly outperforms standard RL algorithms which in the theory and implementation only guarantee practical stability. Additionally, we provide theoretical analysis for stabilization and the convergence for our algorithm to an asymptotically stabilizing policy, which can be verified with LyZNet, a toolbox based on SMT solvers.  Limitations and potential future work are discussed in the Appendix.


\bibliography{mybib}
\newpage 

\onecolumn
\appendix

\begin{center}
\section{Appendix}
\end{center}
\subsection{Proofs of Theoretical Guarantees}
We recall that a $C^k(D)$ function means a $k$-times continuously differentiable function defined on a domain $D$ and a $C(D)$ function is continuous. The following lemmas will be integral to the proofs in this section.

\begin{lem}[Weierstrass M-test]
Let $\{f_n(x))\}$ be a sequence of real valued functions defined on a set $D$. Suppose there exists a sequence of non-negative constants $\{M_n\}$ such that $|f_n(x)| \leq M_n$ for all $x \in D$ and for all $n$, and that $\sum_{n=1}^{\infty} M_n$ is convergent. Then the series $\sum_{n=1}^{\infty} f_n(x)$ converges uniformly on $D$.
\end{lem}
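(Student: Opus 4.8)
The plan is to establish uniform convergence by the standard two-step argument: first pin down the limit function by showing pointwise absolute convergence, and then control the tail of the series by a bound that is independent of $x$. Every estimate will come directly from the convergent constant series $\sum_{n=1}^\infty M_n$ together with the hypothesis $|f_n(x)| \leq M_n$, so the argument relies only on the comparison test and the completeness of $\mathbb{R}$.

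First I would fix an arbitrary $x \in D$ and observe that $\sum_{n=1}^\infty |f_n(x)| \leq \sum_{n=1}^\infty M_n < \infty$ by the comparison test. Hence $\sum_{n=1}^\infty f_n(x)$ converges absolutely, and in particular converges to some real number, which I denote $S(x)$. This defines the candidate limit $S : D \to \mathbb{R}$ pointwise and lets me introduce the partial sums $S_m(x) = \sum_{n=1}^m f_n(x)$, which converge to $S(x)$ for each fixed $x$.

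Next I would bound the approximation error uniformly. For every $m \in \mathbb{N}$ and every $x \in D$,
\begin{equation*}
    |S(x) - S_m(x)| = \left| \sum_{n=m+1}^\infty f_n(x) \right| \leq \sum_{n=m+1}^\infty |f_n(x)| \leq \sum_{n=m+1}^\infty M_n.
\end{equation*}
The decisive point is that the right-hand side does not depend on $x$. Since $\sum_{n=1}^\infty M_n$ converges, its tails vanish, so given any $\epsilon > 0$ there is an $N \in \mathbb{N}$ with $\sum_{n=m+1}^\infty M_n < \epsilon$ for all $m \geq N$. Combining these gives $\sup_{x \in D} |S(x) - S_m(x)| \leq \sum_{n=m+1}^\infty M_n < \epsilon$ whenever $m \geq N$, which is exactly uniform convergence of $S_m$ to $S$ on $D$.

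There is no genuinely difficult step; the only thing requiring care is the order of the argument, since one must secure pointwise convergence first so that the tail $\sum_{n=m+1}^\infty f_n(x)$ and the limit $S(x)$ are well defined before they are estimated. An alternative that avoids naming the limit is to invoke the uniform Cauchy criterion: for $m > \ell \geq N$ one has $\sup_{x \in D} |S_m(x) - S_\ell(x)| \leq \sum_{n=\ell+1}^m M_n < \epsilon$, so $\{S_m\}$ is uniformly Cauchy and hence converges uniformly. I would prefer the tail-estimate version above, as it simultaneously identifies the limit and quantifies the rate of convergence.
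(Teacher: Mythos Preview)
Your proof is correct and is the standard textbook argument. The paper itself does not prove this lemma at all; it is stated without proof as a well-known prerequisite (``The following lemmas will be integral to the proofs in this section'') and then invoked in the proof of Lemma~\ref{value_approximation}. So there is nothing to compare against, and your write-up would serve perfectly well if a proof were required.
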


\begin{lem}[Universal Approximation Theorem]
    Let $D \subset \mathbb{R}^m$ be a compact set and suppose $f : D \to \mathbb{R}^n \in C(\mathbb{R}^n)$. Then, for every $\epsilon > 0$ there exists a neural network $\phi : D \to \mathbb{R}$ with $C(\mathbb{R})$ non-polynomial activation function such that 
\begin{equation}
    \|f(x) - \phi(x)\|_D < \epsilon. 
\end{equation}
\end{lem}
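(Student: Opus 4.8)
The plan is to prove the single-hidden-layer version of the statement, from which the general claim follows, and to reduce density in $C(D)$ to the classical Weierstrass approximation theorem for polynomials. First I would dispose of two easy reductions. Since a vector-valued $f = (f_1, \dots, f_n)$ can be approximated coordinatewise and the resulting scalar networks stacked (splitting the tolerance as $\epsilon/n$ across components), it suffices to treat a scalar target $f : D \to \mathbb{R}$. It then suffices to show that the space $\mathcal{N}(\sigma) := \operatorname{span}\{\, x \mapsto \sigma(\langle w, x\rangle + b) : w \in \mathbb{R}^m,\ b \in \mathbb{R}\,\}$ of single-layer ridge combinations is dense in $C(D)$ under $\|\cdot\|_D$. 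The overall strategy is to show that the uniform closure of $\mathcal{N}(\sigma)$ contains every polynomial; density of $\mathcal{N}(\sigma)$ in $C(D)$ then follows immediately from Weierstrass.

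For the main step I would first assume $\sigma \in C^\infty(\mathbb{R})$. Because $\sigma$ is not a polynomial, for every $k$ the derivative $\sigma^{(k)}$ is not identically zero, so I may pick $b_k$ with $\sigma^{(k)}(b_k) \neq 0$. Fixing a direction $a \in \mathbb{R}^m$ and setting $g_a(t, x) := \sigma(t\langle a, x\rangle + b_k)$, differentiation in $t$ at $t = 0$ yields $\partial_t^k g_a(0, x) = \langle a, x\rangle^k\, \sigma^{(k)}(b_k)$. The key observation is that this $k$-th derivative is the uniform limit on the compact set $D$ of $k$-th order finite differences of $t \mapsto g_a(t, x)$, and each such finite difference is a finite linear combination of maps $x \mapsto \sigma(t_j \langle a, x\rangle + b_k)$, hence an element of $\mathcal{N}(\sigma)$. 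Therefore every ridge power $x \mapsto \langle a, x\rangle^k$ lies in the uniform closure of $\mathcal{N}(\sigma)$. Since homogeneous polynomials of degree $k$ in $m$ variables are spanned by the powers $\langle a, x\rangle^k$ as $a$ ranges over $\mathbb{R}^m$ (a standard polarization identity), the closure contains all polynomials, and Weierstrass finishes the smooth case.

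To remove smoothness and retain only $\sigma \in C(\mathbb{R})$ non-polynomial, I would mollify: set $\sigma_\delta := \sigma * \rho_\delta$ for a smooth compactly supported mollifier $\rho_\delta$. Then $\sigma_\delta \in C^\infty$, and $\sigma_\delta$ remains non-polynomial for small $\delta$ (were it a polynomial of degree $\le N$ along a sequence $\delta \to 0$, then since $\sigma_\delta \to \sigma$ uniformly on compacta, $\sigma$ would be a polynomial of degree $\le N$, a contradiction). Each ridge function $x \mapsto \sigma_\delta(\langle w, x\rangle + b)$ is an integral of translates $x \mapsto \sigma(\langle w, x\rangle + b - s)$ weighted by $\rho_\delta(s)$; approximating this integral by Riemann sums produces elements of $\mathcal{N}(\sigma)$ converging uniformly on $D$, where the Weierstrass M-test bounds the tails and controls convergence since the integrand is uniformly continuous over the relevant compact range. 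Thus $\mathcal{N}(\sigma_\delta)$ lies in the uniform closure of $\mathcal{N}(\sigma)$, and applying the smooth case to $\sigma_\delta$ yields the desired density for $\sigma$.

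I expect the \textbf{main obstacle} to be the two uniform-convergence arguments rather than the algebra: establishing that the finite-difference quotients converge to $\partial_t^k g_a$ \emph{uniformly in} $x \in D$ (not merely pointwise), which uses compactness of $D$ together with uniform continuity of the relevant derivatives over a bounded $t$-range, and the analogous uniform control of the Riemann-sum approximation in the mollification step. A secondary point requiring care is verifying that non-polynomiality survives mollification and is genuinely the sharp hypothesis. This argument is the Leshno--Lin--Pinkus--Schocken characterization, and I would cite the survey of Pinkus for both the polarization identity and the sharpness of the non-polynomial condition.
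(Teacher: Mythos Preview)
Your proposal is correct and follows precisely the Leshno--Lin--Pinkus--Schocken argument; the paper does not supply its own proof of this lemma but simply cites the survey of Pinkus, which contains exactly the approach you outline. In that sense your route coincides with the paper's (referenced) one.
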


\begin{lem}[Universal Approximation Theorem - Smooth version]
\label{UAP_der}
Let $D \subset \mathbb{R}^m$ be a compact set and suppose $f : D \to \mathbb{R}^n \in C^1(\mathbb{R}^n)$. Then, for every $\epsilon > 0$ there exists a neural network $\phi : D \to \mathbb{R}$ with $C^1(\mathbb{R})$ non-polynomial activation function such that 
\begin{equation}
\label{fcn_app}
    \|f(x) - \phi(x)\|_D < \epsilon, 
\end{equation}
and for all $i = 1, \hdots, n$, the following simultaneously holds
\begin{equation}
\label{der_app}
    \left\|\frac{\partial f}{\partial x_i} - \frac{\partial \phi}{\partial x_i} \right\|_D < \epsilon.
\end{equation}
\end{lem}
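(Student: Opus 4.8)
The plan is to establish this as the $C^1$ (simultaneous function-and-derivative) version of the universal approximation theorem, following the classical ridge-function and parameter-differentiation argument of Pinkus and Hornik. Since the codomain plays essentially no role — a vector-valued target is approximated componentwise and the triangle inequality absorbs the finitely many coordinates — I would reduce immediately to the scalar case and seek density of single-hidden-layer networks in the norm $\|g\|_{C^1(D)} := \|g\|_D + \sum_i \|\partial g/\partial x_i\|_D$. Let $V$ denote the closure, in this $C^1$ norm, of the linear span of ridge functions $x \mapsto \sigma(w\cdot x + b)$ with $w \in \mathbb{R}^m$ and $b \in \mathbb{R}$, where $\sigma$ is the non-polynomial activation; the goal is then exactly $V = C^1(D)$, which yields \eqref{fcn_app} and \eqref{der_app} simultaneously for any prescribed $\epsilon$.

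First I would treat the case of a smooth activation, $\sigma \in C^\infty$. The engine is parameter differentiation: for a fixed direction $a$, the scalar map $t \mapsto \sigma(t\,(a\cdot x) + b)$ has $k$-th $t$-derivative at $t=0$ equal to $(a\cdot x)^k\,\sigma^{(k)}(b)$, and this derivative is the $C^1(D)$-limit of symmetric difference quotients of order $k$, each of which is a finite linear combination of $\sigma$-ridge functions. Because $\sigma$ is non-polynomial we have $\sigma^{(k)} \not\equiv 0$, so some $b^*$ satisfies $\sigma^{(k)}(b^*) \neq 0$; dividing out shows $(a\cdot x)^k \in V$ for every $k$ and every direction $a$. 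By a polarization identity, finite linear combinations of such ridge monomials reconstruct every monomial $x^\alpha$, so $V$ contains all polynomials. Since $D$ is compact and $f \in C^1(\mathbb{R}^n)$ is globally $C^1$, the classical simultaneous polynomial approximation theorem (mollify $f$ to a $C^\infty$ function and approximate it together with its gradient) shows polynomials are $C^1$-dense in $C^1(D)$; as $V$ is $C^1$-closed and contains a $C^1$-dense set, $V = C^1(D)$.

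To remove the extra smoothness and handle a general non-polynomial $\sigma \in C^1$, I would pass through a mollification $\sigma_\delta := \sigma * \psi_\delta$ with $\psi_\delta$ a standard smooth bump. Then $\sigma_\delta \in C^\infty$, and by the Leshno--Lin--Pinkus--Schocken characterization (a locally integrable function all of whose mollifications are polynomials must itself be a polynomial) some $\sigma_\delta$ is non-polynomial, so the previous step applies to networks built from $\sigma_\delta$. Finally, each $\sigma_\delta$-ridge function $\sigma_\delta(w\cdot x + b) = \int \sigma(w\cdot x + b - y)\,\psi_\delta(y)\,dy$ is a $C^1(D)$-limit of Riemann sums that are themselves linear combinations of $\sigma$-ridge functions; here the Weierstrass M-test certifies the uniform convergence of these sums and of their $x$-gradients on $D$. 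Hence the $\sigma$-network span is $C^1$-dense as well, completing the proof.

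The hard part will be making the parameter-differentiation limits rigorous in the $C^1$ norm rather than merely pointwise: taking $k$ difference quotients in $t$ while still controlling the $x$-gradient uniformly on $D$ requires derivatives of $\sigma$ up to order $k+1$, which is precisely why the reduction to a $C^\infty$ activation via mollification is essential, and why the non-polynomiality must be shown to survive mollification. By contrast, the polynomial-density-in-$C^1$ step and the polarization identity are routine once the smooth case is set up.
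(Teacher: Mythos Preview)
Your proposal is correct and is precisely the Pinkus--Hornik ridge-function argument (parameter differentiation to recover ridge monomials, polarization to polynomials, $C^1$-density of polynomials, and mollification to reduce a $C^1$ non-polynomial activation to the $C^\infty$ case). The paper does not supply its own proof of this lemma at all: it simply states the result and defers to \cite{pinkus_1999}, so your sketch is exactly the argument behind the cited reference and there is nothing further to compare.
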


\noindent The proofs of the universal approximation theorems for neural networks can be found in \cite{pinkus_1999}. As a preliminary result we first prove the claim under Assumption \ref{Assumption_optimal} that if the controller is exponentially stabilizable then Assumption \ref{Assumption_optimal} (b) is guaranteed to hold. 

\begin{thm}[Hartman-Grobman]
    Consider the system (\ref{dyn}) with a policy $\pi(x)$. Suppose that at the equilibrium point, the Jacobian matrix $A=\left[\partial f_i / \partial x_j\right]_{\mid x = 0}$ has no eigenvalue with real part equal to zero. Then there exists a neighbourhood $N$ of the equilibrium point and a homeomorphism $h: N \rightarrow \mathbb{R}^n$, such that $h(0)=0$ and such that in the neighbourhood $N$ the system $x_{k+1} = f(x_k, \pi(x_k))$ is topologically conjugate to its linearization $h(x_{k+1}) = A h(x_k)$.
\end{thm}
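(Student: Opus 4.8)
The statement is the discrete-time Hartman--Grobman theorem applied to the closed-loop map $F(x):=f(x,\pi(x))$, so one legitimate option is simply to cite the classical result (the standard local-conjugacy theorem for diffeomorphisms); I record instead how I would reprove it, since the argument is short and self-contained. First I would note that the relevant hyperbolicity hypothesis for a \emph{map} is that $A$ has no eigenvalue on the unit circle (the ``real part equal to zero'' condition being the continuous-time version), and I take $A$ invertible, as holds automatically at a hyperbolic fixed point of a diffeomorphism. Writing $F(x)=Ax+g(x)$ with $g\in C^1$ near $0$, $g(0)=0$, $Dg(0)=0$, I would choose an adapted inner product so that $\mathbb{R}^n=E^s\oplus E^u$ splits $A$-invariantly with $\|A|_{E^s}\|\le a<1$ and $\|(A|_{E^u})^{-1}\|\le b<1$, then multiply $g$ by a smooth cutoff supported in $B_\delta(0)$; for $\delta$ small this yields a globally defined $\tilde F=A+\tilde g$ that agrees with $F$ on $B_{\delta/2}(0)$ and has $\tilde g$ bounded with Lipschitz constant $\varepsilon$ as small as desired. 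Because the local conjugacy statement is unchanged by this surgery, it suffices to conjugate $\tilde F$ to $A$ globally.

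The heart of the argument is to seek the conjugacy as $h=\mathrm{id}+v$ with $v$ in the Banach space $C_b(\mathbb{R}^n,\mathbb{R}^n)$ of bounded continuous maps under the sup norm. Expanding $h\circ A=\tilde F\circ h$ gives the equivalent equation $\mathcal{L}v=\tilde g\circ(\mathrm{id}+v)$, where $\mathcal{L}v:=v\circ A-Av$. The key lemma is that $\mathcal{L}$ is boundedly invertible on $C_b$: since $A$ respects the splitting, $\mathcal{L}$ is block-diagonal with respect to decomposing the codomain into $E^s\oplus E^u$, the composition operator $v\mapsto v\circ A$ is invertible on $C_b$ (as $A$ is invertible and surjective), and a Neumann-series argument using $\|A|_{E^s}\|\le a<1$ on the stable block and $\|(A|_{E^u})^{-1}\|\le b<1$ on the unstable block yields $\|\mathcal{L}^{-1}\|\le C$ for an explicit $C=C(a,b)$. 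The conjugacy equation then reads $v=\mathcal{L}^{-1}(\tilde g\circ(\mathrm{id}+v))=:\mathcal{T}(v)$, and since $\mathrm{Lip}(\tilde g)\le\varepsilon$ the map $\mathcal{T}$ is a contraction on $C_b$ with constant $C\varepsilon<1$ once $\varepsilon$ (hence $\delta$) is chosen small. Banach's fixed-point theorem produces a unique $v\in C_b$, giving $h=\mathrm{id}+v$ with $h\circ A=\tilde F\circ h$.

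It remains to make $h$ a homeomorphism fixing the origin. Running the same argument with $A$ and $\tilde F$ interchanged (valid because $\tilde F$ is an $\varepsilon$-perturbation of the hyperbolic invertible $A$) produces $\bar h=\mathrm{id}+\bar v$ with $A\circ\bar h=\bar h\circ\tilde F$; then $\bar h\circ h$ conjugates $A$ to $A$ and differs from the identity by a bounded map, and the identity is the unique such map (it is the fixed point of $\mathcal{T}$ when $\tilde g\equiv0$), so $\bar h\circ h=\mathrm{id}$ and symmetrically $h\circ\bar h=\mathrm{id}$. Finally $h(0)$ is a fixed point of $\tilde F$, i.e.\ a solution of $(I-A)x=\tilde g(x)$; since $I-A$ is invertible ($1$ is not an eigenvalue of $A$) and $\tilde g$ is small Lipschitz with $\tilde g(0)=0$, the only solution is $x=0$, so $h(0)=0$. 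Restricting to the neighbourhood $N=B_{\delta/2}(0)$ on which $\tilde F=F$ gives the claimed conjugacy $h(x_{k+1})=Ah(x_k)$. I expect the main obstacle to be the invertibility estimate for $\mathcal{L}$: this is precisely where hyperbolicity enters, and one must be careful with the adapted norm and the Neumann-series bookkeeping, since without a uniform spectral gap from the unit circle $\mathcal{L}$ need not be bounded below and no such conjugacy exists.
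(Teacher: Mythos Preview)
Your proof is correct and follows the standard functional-analytic route (sometimes attributed to Pugh or Moser): localize by cutoff, seek the conjugacy as $\mathrm{id}+v$ in $C_b$, reduce to inverting the hyperbolic transfer operator $\mathcal{L}v=v\circ A-Av$ via the codomain splitting $E^s\oplus E^u$ and Neumann series, then obtain bijectivity by running the argument in both directions and invoking uniqueness of the fixed point. Your observation that the correct discrete-time hypothesis is ``no eigenvalue on the unit circle'' rather than ``no eigenvalue with zero real part'' is well taken; the latter is the continuous-time condition, and the paper's statement appears to conflate the two.

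The comparison with the paper is simple: the paper does not prove this theorem at all. It records the Hartman--Grobman theorem as a classical background result and immediately remarks that only the corollary on local asymptotic stability (linearization with all eigenvalues strictly inside the unit disk implies local asymptotic stability of the nonlinear closed loop) is used in the subsequent proofs. So your proposal goes well beyond what the paper does. If anything, your write-up is more careful than the paper's statement, since you flag both the spectral-gap condition appropriate to maps and the need for $A$ to be invertible to make the Neumann-series estimate on the stable block go through.
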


\noindent We state the result in full generality, however, for the proofs in this section we only a corollary that the system $x_{k+1} = f(x_k, \pi(x_k))$ is locally asymptotically stable if all eigenvalues of the Jacobian at the equilibrium point have negative real part.

\begin{pro}
    Suppose that the optimal controller $\mathbf{u}$ is exponentially stabilizable, then Assumption \ref{Assumption_optimal} holds with $a_V = \frac{M}{1 - e^{-\lambda}}$.
\end{pro}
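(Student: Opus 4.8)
The plan is to convert the constant reward offset $R/(1-\gamma)$ into a geometric series and then dominate it termwise by the exponential decay bound furnished by the hypothesis. First I would record that, since $r(x,u) = R - x^{T}Qx$ by \eqref{reward}, we have $R - r(x,u) = x^{T}Qx \ge 0$ for all $(x,u)$, so that $\frac{R}{1-\gamma} = \sum_{k=0}^{\infty}\gamma^{k}R$ is a convergent series of nonnegative terms. Fix $x \in \mathbb{R}^{n}$ and $\gamma \in (0,1)$, and let $\mathbf{u}(x) = (u_{1}(x), u_{2}(x),\ldots)$ be the infinite-length control sequence guaranteed by exponential stabilizability, so that $R - r(\Psi(k,x,\mathbf{u}(k)),\mathbf{u}(k)) \le M\|x\|^{2}e^{-\lambda k}$ for every $k \ge 0$; note this bound also gives $R - M\|x\|^{2}e^{-\lambda k} \le r(\Psi(k,x,\mathbf{u}(k)),\mathbf{u}(k)) \le R$, so $\mathbf{J}_{\gamma}^{\mathbf{u}(x)}(x)$ is well defined.

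The heart of the argument is the estimate
\begin{equation*}
\frac{R}{1-\gamma} - \mathbf{J}_{\gamma}^{\mathbf{u}(x)}(x) = \sum_{k=0}^{\infty}\gamma^{k}\bigl(R - r(\Psi(k,x,\mathbf{u}(k)),\mathbf{u}(k))\bigr) \le M\|x\|^{2}\sum_{k=0}^{\infty}\bigl(\gamma e^{-\lambda}\bigr)^{k} = \frac{M\|x\|^{2}}{1-\gamma e^{-\lambda}},
\end{equation*}
where the regrouping of the two convergent series is legitimate because each summand $\gamma^{k}(R-r)$ is nonnegative and majorized by the summable sequence $\gamma^{k}M\|x\|^{2}e^{-\lambda k}$. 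Since $\gamma<1$ and $\lambda>0$ yield $\gamma e^{-\lambda}\le e^{-\lambda}<1$, the right-hand side is bounded, uniformly over $\gamma\in(0,1)$, by $\frac{M}{1-e^{-\lambda}}\|x\|^{2}$.

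To conclude I would invoke optimality: $\mathbf{u}(x)$ is in particular an admissible control, so by Definition \ref{optimal_value} and Assumption \ref{Assumption_optimal}(a), $V_{\gamma}(x) = \sup_{\mathbf{u}}\mathbf{J}_{\gamma}^{\mathbf{u}}(x) \ge \mathbf{J}_{\gamma}^{\mathbf{u}(x)}(x)$. Hence $\frac{R}{1-\gamma} - V_{\gamma}(x) \le \frac{R}{1-\gamma} - \mathbf{J}_{\gamma}^{\mathbf{u}(x)}(x) \le \frac{M}{1-e^{-\lambda}}\|x\|^{2}$, which is precisely \eqref{Assumption_linearvalue} with $a_{V}=\frac{M}{1-e^{-\lambda}}$; part (a) of Assumption \ref{Assumption_optimal} is already built into the hypothesis, which presupposes that an optimal controller exists.

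I do not expect a genuine obstacle here: the proof is essentially a geometric-series bound. The only points that merit a sentence of justification are the nonnegativity of $R-r$ — which licenses the splitting and re-summation of the series, guarantees convergence of $\mathbf{J}_{\gamma}^{\mathbf{u}(x)}$, and is also consistent with the claimed bound since it forces $\frac{R}{1-\gamma}-V_{\gamma}(x)\ge 0$ — and the uniformity of the constant $a_V$ in $\gamma$, which follows from the crude inequality $\gamma e^{-\lambda}\le e^{-\lambda}$.
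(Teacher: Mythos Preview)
Your proposal is correct and follows essentially the same geometric-series argument as the paper's proof. The only cosmetic difference is the order of relaxations: the paper drops $\gamma^{k}\le 1$ first and then applies the exponential bound, whereas you apply the exponential bound first to obtain the sharper intermediate estimate $\frac{M\|x\|^{2}}{1-\gamma e^{-\lambda}}$ and only then relax via $\gamma e^{-\lambda}\le e^{-\lambda}$; you also make the optimality step $V_{\gamma}(x)\ge \mathbf{J}_{\gamma}^{\mathbf{u}(x)}(x)$ explicit, which the paper leaves implicit.
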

\begin{proof}
Writing out the expression for the reward and using the definition of exponential stability gives 
    \begin{align*}
        \frac{1}{1 - \gamma} - \sum_{k = 0}^\infty \gamma^k r(\Psi(k, x, \mathbf{u}(k)), \mathbf{u}(k)) &= \sum_{k = 0}^\infty \gamma^k \left(1 - r(\Psi(k, x, \mathbf{u}(k)), \mathbf{u}(k)) \right) \\
        &\leq \sum_{k=0}^\infty 1 - r(\Psi(k, x, \mathbf{u}(k)), \mathbf{u}(k)) \\
        &\leq  \sum_{k=0}^\infty M \|x\|^2 e^{-\lambda k} \\
        &\leq  \frac{M\|x^2\|}{1 - e^{-\lambda}}.
    \end{align*}
    This concludes the proof.
\end{proof}

\noindent The following result will be a first step in establishing $\frac{R}{1 - \gamma} - V_\gamma(x)$ as a Lyapunov function.

\subsubsection{Proof of Proposition 1.}
As a preliminary step we note that there is an error in the statement of Proposition \ref{Proposition_bounds} as it should be $V_\gamma(x) - V_\gamma(y) \leq - c \| x \|^2$. 
\begin{proof}
    Since the rewards can be scaled by an arbitrary constant, we suppose without loss of generality that $R = 1$. We begin by proving (\ref{bound1}) for the optimal value function. For the lower bound, note that since $x^T Q x \geq \lambda_{\min}(Q) \| x \|^2$ and $\frac{1}{1 - \gamma} - V_\gamma(x) \geq r(x, \mathbf{u}(x)) = x^T Q x$ this implies that 
    \begin{align*}
        \frac{1}{1 - \gamma} - V_\gamma(x) \geq \lambda_{\min}(Q) \|x \|^2.
    \end{align*}
    The upper bound follows directly from Assumption \ref{Assumption_optimal} (b). Finally we prove that (\ref{bound2}) holds. Set $y = f(x, \mathbf{u}_1(x))$ as the next state, then by the dynamic programming principle
    \begin{align*}
        V_\gamma(x) = r(x, \mathbf{u}(0)) + \gamma V_\gamma(y)
    \end{align*}
    and thus,
    \begin{align*}
        V_\gamma(x) - V_\gamma(y) &= r(x, \mathbf{u}(0)) - (1 - \gamma)V_\gamma(y) \\
        &= \left( r(x, \mathbf{u}(0)) - 1 \right) + (1 - \gamma)\left(\frac{1}{1 - \gamma} - V_\gamma(y) \right) \\
        &\leq (1 - \gamma) \alpha_2(\|x\|) - (1 - r(x, \mathbf{u}(0))) \\
        &\leq \frac{1 - \gamma}{\gamma} \alpha_2(\|x\|) - x^TQx \\
        &\leq  \left( - \lambda_{\min}(Q) + \frac{1 - \gamma}{\gamma} a_V \right) \|x\|^2 
    \end{align*}
    We have that trajectories are decreasing up to some pertubative term. To deal with this term we set $\gamma^\star = \frac{a_V}{a_V + \lambda_{\min}(Q)}$. This then implies that for any $\gamma \in (\gamma^\star, 1)$ we have that 
    \begin{align*}
        \frac{1 - \gamma}{\gamma} < \frac{\lambda_{\min}(Q)}{a_V + \lambda_{\min}(Q)} \cdot \frac{a_V + \lambda_{\min}(Q)}{a_V} = \frac{\lambda_{\min}(Q)}{a_V}.
    \end{align*}
    Therefore, it follows that there exists $c > 0$ such that $ - \lambda_{\min}(Q) + \frac{1 - \gamma}{\gamma} a_V < -c$ and we conclude that (\ref{bound2}) holds.
\end{proof}
\begin{rem}
    As we saw in the proof, we have that the value function strictly increases along trajectories up to some perturbative term that depends on $\gamma$. In fact, this can be used to show practical stability should Assumption \ref{Assumption_optimal} (b) not be required. Immediately we see that the discounting setting provides challenges for stability as the optimal controller need not be asymptotically stabilizing should this assumption not hold. 
\end{rem}

\subsubsection{Proof of Theorem 1.}
\begin{proof}
    Set $\gamma^\star$ is given in the proof of Proposition \ref{Proposition_bounds} and if necessary adjust the value of $c$ in Proposition \ref{Proposition_bounds} such that $c < a_V$. Denote 
    \begin{align*}
        L_\gamma(x) := \frac{1}{1 - \gamma} - V_\gamma(x) 
    \end{align*}
    which will serve as our Lyapunov function and observe that (\ref{bound2}) can be rewritten as 
    \begin{align*}
         L_\gamma(y) -  L_\gamma(x) \leq -c \| x \|^2.
    \end{align*}
    As $ L_\gamma(x) \leq a_V \|x \|^2$ by Assumption \ref{Assumption_optimal} (b) we see that 
    \begin{align*}
         L_\gamma(y) -  L_\gamma(x) \leq -c \| x \|^2 \leq - \frac{c}{a_V}  L_\gamma(x)
    \end{align*}
    which implies that $ L_\gamma(y) \leq \left(1 - \frac{c}{a_V} \right)  L_\gamma(x)$ where $\left(1 - \frac{c}{a_V} \right) \in (0, 1)$ and so iterative this inequality gives 
    \begin{align*}
        L_\gamma(\Psi(k, x, \pi^\star(k))) \leq \left(1 - \frac{c}{a_V} \right)^k  L_\gamma(x).
    \end{align*}
    This implies that 
    \begin{align*}
        \| \Psi(k, x, \pi^\star(k)) \|^2 \leq \left(1 - \frac{c}{a_V} \right)^k \frac{a_V}{\lambda_{\min}(Q)} \|x\|^2
    \end{align*}
    and setting $\lambda = - \ln( 1 - \frac{c}{a_V})$ gives 
    \begin{align*}
        \| \Psi(k, x, \pi^\star(k)) \|^2 \leq \frac{a_V}{\lambda_{\min}(Q)} \|x\|^2 e^{-\lambda k}.
    \end{align*}
    This proves that the optimal controller is asympotically stabilizing.
\end{proof}
\begin{rem}
    In particular, we have shown that the optimal controller is exponentially stablizing. 
\end{rem}

\subsubsection{Proof of Theorem 2.}

\noindent To break up the length of the proof we first prove the following key lemma.

\begin{lem} \label{value_approximation}
   Suppose that the reward function is defined according to (\ref{reward}) and satisfies the terminal condition (\ref{terminal}) and the neural policy satisfies Assumption \ref{policy}. Let $\pi$ denote the optimal policy to (\ref{objective}). Then for any $\epsilon > 0$ there exists $\delta > 0$ such that $\|\pi - \pi_\phi\|_D < \delta$ implies that $$\|V^{\pi}_\gamma - V^{\pi_\phi}_\gamma\|_D < \epsilon.$$ 
\end{lem}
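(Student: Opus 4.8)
The plan is to prove a quantitative sup-norm continuity estimate for the map $\pi \mapsto V^{\pi}_\gamma$, using three ingredients: truncating the infinite horizon with the discount factor, a discrete Grönwall bound on how closed-loop trajectories launched from the same state diverge under nearby policies, and Lipschitz continuity of the (state-only) reward on the compact set $D$. For the truncation, set $\bar R := c + \sup_{x \in D}|R - x^{T}Qx|$, which is finite since $D$ is compact and bounds $|r|$ along any trajectory, including the terminal value $-c$; then $\big|\sum_{k \ge T}\gamma^{k} r(\cdot)\big| \le \gamma^{T}\bar R/(1-\gamma)$ uniformly over all policies and all $x \in D$, so picking $T$ with $\gamma^{T}\bar R/(1-\gamma) < \epsilon/3$ reduces matters to bounding the difference of the truncated sums $V^{\pi,T}_\gamma(x) - V^{\pi_\phi,T}_\gamma(x)$ (first $T$ summands) uniformly in $x \in D$.

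For the trajectory sensitivity, let $x_{k}$ and $\tilde x_{k}$ solve $x_{k+1}=f(x_{k},\pi(x_{k}))$ and $\tilde x_{k+1}=f(\tilde x_{k},\pi_\phi(\tilde x_{k}))$ with $x_{0}=\tilde x_{0}=x$, and put $e_{k}=\|x_{k}-\tilde x_{k}\|$, so $e_{0}=0$. Assumption~\ref{Assumption_Lipschitz} together with the bound $\|\pi(x_{k})-\pi_\phi(\tilde x_{k})\| \le \|\pi(x_{k})-\pi(\tilde x_{k})\| + \|\pi(\tilde x_{k})-\pi_\phi(\tilde x_{k})\| \le L_\pi e_{k} + \delta$ — where $L_\pi$ is a Lipschitz constant of the optimal policy on $D$, or, alternatively, one restricts $\pi_\phi$ to neural policies of uniformly bounded Lipschitz constant, which are $C^{1}$ by Assumption~\ref{Assumption_policy} — yields $e_{k+1} \le (L_x + L_u L_\pi)\,e_{k} + L_u\delta$, hence by discrete Grönwall $e_{k} \le \beta_{k}\delta$ with $\beta_{k}$ depending only on $k$, $L_x$, $L_u$, $L_\pi$, and in particular independent of $\delta$ and of $\phi$. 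This step presumes $x_{k},\tilde x_{k}\in D$; one takes $D$ (or, if necessary, a forward-invariant compact subset of it — a sublevel set of the Lyapunov function $R/(1-\gamma)-V_\gamma$ from Proposition~\ref{Proposition_bounds}, which is forward invariant by (\ref{bound2})) so that the optimal trajectory stays in a fixed compact $D_{0}$ with $\operatorname{dist}(D_{0},\partial D)>0$, and then an induction shows that for $\delta < \operatorname{dist}(D_{0},\partial D)/\beta_{T}$ the perturbed trajectory remains in $D$ for all $k\le T$, so the terminal condition~(\ref{terminal}) is never triggered within the first $T$ steps.

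Finally, since $r(x,u)=R-x^{T}Qx$ is independent of $u$ and $C^{1}$, it is Lipschitz on $D$ in $x$ with constant $L_r := 2\lambda_{\max}(Q)\sup_{x\in D}\|x\|$, so $\big|r(x_{k},\pi(x_{k})) - r(\tilde x_{k},\pi_\phi(\tilde x_{k}))\big| = \big|x_{k}^{T}Qx_{k} - \tilde x_{k}^{T}Q\tilde x_{k}\big| \le L_r e_{k} \le L_r\beta_{k}\delta$; summing over $k < T$ gives $\big|V^{\pi,T}_\gamma(x) - V^{\pi_\phi,T}_\gamma(x)\big| \le L_r\delta\sum_{k=0}^{T-1}\beta_{k} =: C_{T}\delta$ for all $x\in D$, and choosing $\delta < \min\{\epsilon/(3C_{T}),\ \operatorname{dist}(D_{0},\partial D)/\beta_{T}\}$ yields $\|V^{\pi}_\gamma - V^{\pi_\phi}_\gamma\|_{D} < \epsilon$. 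I expect the main obstacle to be the terminal condition: it makes $V^{\pi}_\gamma$ genuinely discontinuous at any policy whose trajectory touches $\partial D$, so the estimate is only meaningful once one knows the relevant trajectories remain in the interior of $D$ over a \emph{finite} horizon — which is why the truncation and the forward-invariance argument must be carried out jointly. A secondary subtlety is securing a Lipschitz constant for the control map that is uniform over the admissible perturbations $\pi_\phi$, which is where the smooth-activation hypothesis (Assumption~\ref{Assumption_policy}) enters.
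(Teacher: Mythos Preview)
Your argument is sound and follows a genuinely different route from the paper. Rather than truncating the horizon and running a discrete Gr\"onwall bound on trajectory deviation, the paper first shows (via the Weierstrass $M$-test and termwise differentiation of the series) that $V^{\pi_\phi}_\gamma \in C^2(D)$, and then telescopes the Bellman identity along the \emph{optimal} trajectory: writing $V^\mu_\gamma(x)=r(x,\mu(x))+\gamma V^\mu_\gamma(f(x,\mu(x)))$ for $\mu\in\{\pi,\pi_\phi\}$ and subtracting, a second-order Taylor expansion of $r(x,\cdot)$ and of $V^{\pi_\phi}_\gamma(f(x,\cdot))$ in the control variable yields the recursion
\[
|V^\pi_\gamma(x)-V^{\pi_\phi}_\gamma(x)|\le \gamma\,|V^\pi_\gamma(y)-V^{\pi_\phi}_\gamma(y)|+O(\delta),\qquad y=f(x,\pi(x)),
\]
which sums to $O(\delta)/(1-\gamma)$ directly, with the $\gamma^k$ prefactor killing the residual term. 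The paper's gain is that both value functions are always evaluated at the \emph{same} state along the optimal orbit, so no Lipschitz constant for either policy is ever invoked and there is no $(L_x+L_uL_\pi)^k$ growth to absorb by a finite truncation; the price is the $C^2$ machinery and the appearance of $\|\nabla V^{\pi_\phi}_\gamma\|$ and $\|\nabla^2 V^{\pi_\phi}_\gamma\|$ in the bound, which carry exactly the same uniformity-in-$\phi$ issue you flag for $L_{\pi_\phi}$. Your approach is more elementary --- it uses only the Lipschitz constant of the state-only reward on $D$ and no differentiability of the value function --- and it makes the interaction with the terminal condition and forward invariance explicit, whereas the paper leaves that point implicit inside its regularity step.
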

\begin{proof}
    \textbf{Step 1}. Show that the value function inherits the regularity from the reward function. \newline 
    
    \noindent Note that by definition, 
    \begin{align*}
        V_\gamma^{\pi_\phi}(x) = \sum_{k=0}^\infty \gamma^k r(\Psi(k, x, \pi_{\phi, k}(x), \pi_{\phi, k}(x)) 
    \end{align*}
    is a sum of a sequence of $C^2$ functions $\{f_k(x)\} := \{\gamma^k r(\Psi(k, x, \pi_{\phi, k}(x), \pi_{\phi, k}(x))\}$. Since the reward function is $C^2(D)$ on a compact set and satisfies the terminal condition, it follows from the Weierstrass M-test that $\sum_{k=0}^\infty f_k$ and $\sum_{k=0}^\infty \frac{\partial}{\partial x_j} f_k$ uniformly converge on $D$ and thus from the result on the derivative of uniformly converging functions (reference this later), by taking limits for the derivatives, it follows that 
    \begin{align*}
        \lim_{N \to \infty} \nabla \sum_{k=0}^N f_k = \nabla V_{\gamma}^{\pi_\phi}. 
    \end{align*}
    By making a similar argument for the second order partial derivatives it follows that 
    \begin{align*}
        \lim_{N \to \infty} \nabla^2 \sum_{k=0}^N f_k = \nabla^2 V_{\gamma}^{\pi_\phi}.
    \end{align*}
    It also follows that the value function is $C^2(D)$. \newline 

    \noindent \textbf{Step 2}. Approximate the value function by a neural network policy. \newline 

    \noindent Firstly, by the Bellman dynamic programming principle, we have that 
    \begin{align*}
        V_{\gamma}^\pi(x) &= r(x, \pi(x)) + \gamma V_{\gamma}^\pi (f(x, \pi(x))) \\
        V_{\gamma}^{\pi_\phi}(x) &= r(x, \pi(x)) + \gamma V_{\gamma}^{\pi_\phi} (f(x, \pi(x)))
    \end{align*}
    The idea will be to inductively bound the value function by carefully having the same argument to the value functions to avoid an exponential blow up of the Lipschitz constants. Note that by the assumption on the reward function, there exists $v_r \in D$ such that 
    \begin{align*}
        r(x, \pi_\phi(x)) &= r(x, \pi(x)) + \nabla_u r(x, \pi(x))^T(\pi(x) - \pi_\phi(x)) + \frac{1}{2} (\pi(x) - \pi_\phi(x))^T\nabla^2_u r(v_r, \pi(v_r)) (\pi(x) - \pi_\phi(x)).
    \end{align*}
    The term with the gradient can be bounded by Cauchy-Schwarz, however the quadratic form requires a bit more work. Note that since $\| \pi(x) - \pi_\phi(x) \| \leq \delta$ we have that 
    \begin{align*}
        |\delta^T \nabla^2_u r(v_r) \delta | = |\text{Trace} \left(\nabla^2_u r(v_r, \pi(v_r)) (\pi(x) - \pi_\phi(x)) (\pi(x) - \pi_\phi(x))^T \right)| \leq m \| \nabla^2_u r(v_r, \pi(v_r)) \| \delta^2
    \end{align*}and thus we see that 
    \begin{align*}
        |r(x, \pi_\phi(x)) - r(x, \pi(x))| \leq \| \nabla_u r \| \delta + \frac{m}{2} \| \nabla^2_u r\| \delta^2,
    \end{align*}
    where $\| \nabla_u r \| = \sup_{x \in D} \| \nabla_u r(x, \pi(x)) \|$ and $\| \nabla_u^2 r \| = \sup_{x \in D} \| \nabla_u^2 r(x, \pi(x)) \|$. Similarly, writing the Taylor expansion for the value function, there exists $v_V$ such that  
    \begin{align*}
        V^{\pi_\phi}_\gamma f(x, \pi_\phi(x)) &= V^{\pi_\phi}_\gamma (f(x, \pi(x)) + \nabla V^{\pi_\phi}_\gamma (f(x, \pi(x)))^T \left( f(x, \pi(x)) - f(x, \pi_\phi(x)) \right) \\
        &+ \frac{1}{2} \left( f(x, \pi(x)) - f(x, \pi_\phi(x)) \right)^T \nabla^2 V^{\pi_\phi}_\gamma (f(v_V, \pi(v_V))) \left( f(x, \pi(x)) - f(x, \pi_\phi(x)) \right).
    \end{align*}
    Combining this identity along with the bound for the reward function and the vector field being Lipschitz in the $u$ component we have that
    \begin{align*}
        |V^{\pi}_\gamma f(x, \pi(x)) - V^{\pi_\phi}_\gamma (f(x, \pi(x))| 
        &\leq \gamma | V_{\gamma}^\pi (f(x, \pi(x))) - V^{\pi_\phi}_\gamma (f(x, \pi(x)))| + \| \nabla_u r \| \delta + \frac{m}{2} \| \nabla^2_u r\| \delta^2  \\
        &+ \gamma L_u \| \nabla V^{\pi_\phi} \| \delta + \frac{\gamma(m + n)}{2} L_u^2 \| \nabla^2 V^{\pi_\phi}_\gamma \| \delta^2.
    \end{align*}
    where $\| \nabla V^{\pi_\phi} \| = \sup_{x \in D} \| \nabla V^{\pi_\phi}(f(x, \pi(x))) \|$ and $ \| \nabla^2 V^{\pi_\phi}_\gamma \| = \sup_{x \in D} \| \nabla^2 V^{\pi_\phi}_\gamma (f(x, \pi(x))) \|$. By inductively arguing as we have done above we obtain the following bound
    \begin{align*}
        |V^{\pi}_\gamma f(x, \pi(x)) - V^{\pi_\phi}_\gamma (f(x, \pi(x))| &\leq \lim_{k \to \infty} \left( \gamma^k | V_{\gamma}^\pi (f(x, \pi(x))) - V^{\pi_\phi}_\gamma (f(x, \pi(x)))| \right) +  \left( \delta \| \nabla_u r \| + \frac{m\delta^2}{2} \| \nabla^2_u r\| \right) \sum_{k= 0}^\infty \gamma^k \\
        &+ \left( \delta L_u \| \nabla V^{\pi_\phi} \|^2  + \frac{(m + n)\delta^2}{2}  L_u^2 \| \nabla^2 V^{\pi_\phi}_\gamma \| \right) \sum_{k = 1}^\infty \gamma^k.
    \end{align*}
    The first term in the limit will converge to 0 by Theorem 1. Subsequently, we see that $|V^{\pi}_\gamma (x) - V^{\pi_\phi}_\gamma (x)|$ can be made arbitrarily small so long as we choose $\delta$ sufficiently small. This concludes the proof of the lemma.
\end{proof}

\noindent Now we are ready to prove Theorem \ref{thm_nnet_stab}.

\begin{proof}
    Denote $\max_{\|x \| \leq \delta} \| \pi(x) \| = M_\delta$ which exists since $\pi$ is a smooth function of the state by Assumption \ref{Assumption_policy}. Since $\pi_\phi(0) = 0$ we have that $M_\delta \to 0$ as $\delta \to 0$. Furthermore, there exists $\delta > 0$ such that  
    \begin{align*}
        L_x \delta + L_u M_\delta \leq \Delta. 
    \end{align*}
    Note from the proof of Theorem 1 and (\ref{bound1}) of Proposition \ref{Proposition_bounds} that $L_\gamma(y) < (1 - \frac{c}{2 a_v}) L_\gamma(x)$ holds for all $x \in D \setminus B_\delta(0)$ where $B_\delta(0)$ is a $\delta$ neighborhood of the origin. Then, by the Bolzano-Weierstrass theorem we are guaranteed that the following minimum is strictly positive, 
    \begin{align*}
        \eta := \min_{x \in D \setminus B_\delta(0)} \left(1 - \frac{c}{2 a_v}\right) L_\gamma(x) - L_\gamma(f(x, \pi^\star(x)) > 0.
    \end{align*}
    Therefore, by Lemma \ref{value_approximation}, there exists $\epsilon_1 > 0$ such that if $\| \pi^\star - \pi_\phi\|_D < \epsilon_1$ then $\|V_\gamma - V_\gamma^{\pi_\phi}\|_D < \frac{\eta}{2}$ and thus we see that the following inequality also holds in the place of a neural policy 
    \begin{align*}
        L_{\gamma}^{\pi_\phi}(y) < (1 - \frac{c}{2 a_v}) L_{\gamma}^{\pi_\phi}(x)\; \text{ for all } x \in D \setminus B_\delta(0),
    \end{align*}
    where $L_{\gamma}^{\pi_\phi}(x) := \frac{1}{1 - \gamma} - V_{\gamma}^{\pi_\phi}(x)$. By a similar argument as the one above, there exists $\epsilon_2 > 0$ for which we have that $\| \pi^\star - \pi_\phi\|_D < \epsilon_2$ implies  
    \begin{align*}
        \frac{\lambda_{\min}(Q)}{2} \| x \|^2 \leq L_{\gamma}^{\pi_\phi}(x) \leq 2 a_V \| x \|^2 \; \text{ for all } x \in D \setminus B_\delta(0),
    \end{align*}
    and so we set $\epsilon = \min(\epsilon_1, \epsilon_2)$. Thus, combining the two inequalities implies that 
    \begin{align*}
        \| \Psi(k, x, \pi_\phi(k)) \|^2 \leq \left(1 - \frac{c}{2 a_V} \right)^k \frac{4 a_V}{\lambda_{\min}(Q)} \|x\|^2  \; \text{ for all } x \in D \setminus B_\delta(0). 
    \end{align*}
    This shows that all trajectories under the controller $\pi_\phi$ reach $B_\delta(0)$. However, since the system is in discrete time we could jump arbitrarily far out of this ball. To bound the next state for $x \in B_\delta(0)$ we use the Lipschitz assumption for the vector field. Indeed, by Assumption \ref{Assumption_Lipschitz} we have for any $x \in B_\delta(0)$ that 
    \begin{align*}
        \| f(x, \pi_\phi(x)) \| = \| f(x, \pi_\phi(x)) - f(0, \pi_\phi(0)) \| \leq L_x \delta + L_u M_\delta \Delta. 
    \end{align*}
    This shows that the next state will always be contained in $B_\Delta(0)$ meaning that all trajectories eventually enter $B_\Delta(0)$ and stay in this neighborhood. This proves that the neural policy $\pi_\phi$ is practically stabilizing.
\end{proof}

\subsubsection{Proof of Corollary 1.} 
\begin{proof}
By Theorem \ref{thm_nnet_stab} there exists $\gamma^\star$ and $\epsilon > 0$ such that any neural policy $\pi_\phi$ satisfying $\|\pi_\phi - \pi^\star \| < 2 \epsilon$ is practically stabilizing. Proposition \ref{Proposition_temp} asserts that we can set the temperature parameter $\alpha^\star$ such that for any $\alpha \in (0, \alpha^\star)$ we have that 
\begin{align*}
    |\mathbb{E}\pi^{\text{SAC}} - \pi^\star |_D < \epsilon.
\end{align*}
Therefore, we immediately see that if $\| \mathbb{E} \left[\pi^{\text{SAC}} - \pi_{\phi}{^{\text{SAC}}} \right] \|_D < \epsilon$ then practical stability holds for the policy $\mathbb{E} \left[ \pi_{\phi}{^{\text{SAC}}} \right]$. 
\end{proof}

\subsubsection{Proof of Proposition 2.}
\begin{proof}
    Since the policy is a Gaussian random variable, (\ref{obj_entropy}) is given by 
    \begin{align*}
        \mathbf{J}_\gamma^{\pi_\phi^\text{SAC}} &= \mathbb{E}_{\rho_{\pi_\phi^\text{SAC}}} \sum_{k = 0}^\infty \gamma^k r(\Psi(k, x, \pi_{\phi, k}^\text{SAC}(x), \pi_{\phi, k}^\text{SAC}(x)) + \frac{\alpha}{2} \gamma^k \ln \left( 2 \pi e \sigma(x_k)^2 \right) \\
        &= \sum_{k = 0}^\infty \mathbb{E}_{x_k \sim d_k, a_k \sim \pi(a_k \mid x_k)} \gamma^k r(\Psi(k, x, \pi_{\phi, k}^\text{SAC}(x), \pi_{\phi, k}^\text{SAC}(x)) \\
        &= \sum_{k = 0}^\infty \mathbb{E}_{x_k} \left[ \mathbb{E}_{a_k \mid x_k} \left[ \gamma^k (1 - x^TQx) + \frac{\alpha}{2} \gamma^k  \ln \left( 2 \pi e \sigma(x_k)^2 \right)  \right] \right] \\
        &= \mathbb{E}_\rho \sum_{k=0}^\infty \left[ \gamma^k (1 - x_k^T Q x_k) + \frac{\alpha}{2} \gamma^k  \ln \left( 2 \pi e \sigma(x_k)^2 \right) \right] \\
        &= \frac{1}{1 - \gamma} - \mathbb{E}_\rho \left[ \sum_{k=0}^\infty x_k^T Q x_k \right] + \mathbb{E}_\rho \left[ \sum_{k=0}^\infty \frac{\alpha}{2} \gamma^k  \ln \left( 2 \pi e \sigma(x_k)^2 \right) \right]
    \end{align*}
    where $d_k$ is the marginal distribution of $x_k$ and $\rho$ is the distribution of the trajectory $(x_0, x_1, \hdots)$. 
    Since $\mathbb{E} \pi_\phi^{\text{SAC}} = \pi^\star$ maximizes $\frac{1}{1 - \gamma} - \mathbb{E}_\rho \left[ \sum_{k=0}^\infty x_k^T Q x_k \right]$. Moreover, as $D$ is a compact set, $\sup_{x \in D} \sigma(x)^2 := \sigma_D^2 < \infty$ and we have the following bound
    \begin{align*}
        \mathbb{E}_\rho \left[ \sum_{k=0}^\infty \frac{\alpha}{2} \gamma^k  \ln \left( 2 \pi e \sigma(x_k)^2 \right) \right] \leq \frac{\alpha}{2 (1 - \gamma)} \ln(2 \pi e \sigma_D^2).
    \end{align*}
    Therefore, by Assumption \ref{Assumption_optimal} (a) we have that $\| \mathbb{E} \pi_\phi^{\text{SAC}} - \pi^\star \|_\infty \to 0$ as $\alpha \to 0$. This implies there exists $\alpha^\star > 0$ such that  
    \begin{align*}
        \| \mathbb{E} \pi_\phi^{\text{SAC}} - \pi^\star \|_D < \epsilon
    \end{align*}
    for all $\alpha \in (0, \alpha^\star)$.
\end{proof}
\subsubsection{Proof of Proposition 3.}

Before writing the proof we apologize that there is an error in the main submission for condition (\ref{linear_order}). The correct condition should be 
\begin{equation*}
    | \mathbb{E} \left[\pi^\star(x) - \pi_{\phi}{^{\text{SAC}}}(x) \right] | \leq \eta \|x\|.
\end{equation*}
Moreover, the value of $\eta$ should be $$\eta < \frac{c}{\| \nabla L_\gamma \|^2}$$ where $\| \nabla L_\gamma \|^2 := \sup_{x \in D} \| \nabla L_\gamma(x) \|^2$.
\begin{proof}
    This result will extend the practical stability of Corollary 1 to asymptotic stability. Let $L_\gamma(x)$ be the Lyapunov function of the optimal policy $\pi^\star$ as in Theorem \ref{thm_converge}. The Lyapunov function satisfies $L_\gamma(0) = 0$ and $L_\gamma(x) > 0$ for all $x \in D \setminus \{0\}$. We will show that under the assumptions of the proposition that the same Lyapunov function verifies $L_\gamma(y) - L_\gamma(x) < 0$ for all $x \in D \setminus \{0\}$ where $y$ is the next state of the perturbed system
    \begin{align*}
        y = f(x, \pi^\star(x)) + g(x)
    \end{align*}
    where $g(x) = f(x, \pi_{\phi}^{\text{SAC}}(x)) - f(x, \pi^\star(x))$. By the mean value theorem we have that 
    \begin{align*}
        L_\gamma(f(x, \pi^\star(x)) + g(x)) = L_\gamma(f(x, \pi^\star(x))) + \nabla L_\gamma(v)^T g(x) 
    \end{align*}
    and this implies that 
    \begin{align*}
        L_\gamma(f(x, \pi^\star(x)) + g(x)) - L_\gamma(x) &= L_\gamma(f(x, \pi^\star(x)) + g(x)) - L_\gamma(f(x, \pi^\star(x))) + L_\gamma(f(x, \pi^\star(x))) - L_\gamma(x) \\
        &= \nabla L_\gamma(v)^T g(x) + L_\gamma(f(x, \pi^\star(x))) - L_\gamma(x) \\
        &\leq \| \nabla L_\gamma \|^2 \|g(x)\|^2 - c \|x\|^2 
    \end{align*}
    and thus, we see for the perturbed system that 
    $$
    L_\gamma(f(x, \pi^\star(x)) + g(x)) - L_\gamma(x) < 0.
    $$
    Therefore, by Theorem 1.2 of \cite{bof2018lyapunovtheorydiscretetime} we have that the Lyapunov function $L_\gamma(x)$ verifies asymptotic stability for the system $x_{k+1} =  f(x_k, \pi_{\phi}^{\text{SAC}}(x_k))$ and the neural policy is asymptotically stabilizing.
    
\end{proof}
\begin{rem}
    As was mentioned previously, guaranteeing asymptotic convergence through the policy trained by RL is quite challenging and to handle the limitations of practical stability, an algorithm likes ours should be used.
\end{rem}

\subsubsection{Proof of Theorem 3.} The idea of the proof entails using the neural policy to guide the system into a sufficiently small neighborhood of the equilibrium point. Once the system is within this region, linearization is applied to ensure local asymptotic stability. The following result is needed for the proof.

\begin{proof}
   The linearization for the vector field $f(x, \pi_\phi^{\text{SAC}}(x))$ at the origin is given by
    \begin{align*}
        A + B \hat{K} := \frac{\partial f}{\partial x}((0, \pi_\phi^{\text{SAC}}(0)) + \frac{\partial f}{\partial u}((0, \pi_\phi^{\text{SAC}}(0)) \frac{\partial \pi}{\partial x}(0) = \frac{\partial f}{\partial x}((0, \pi_\phi^{\text{SAC}}(0)) + \frac{\partial f}{\partial u}((0, \pi_\phi^{\text{SAC}}(0)) \hat{K},
    \end{align*} 
    \noindent where $\hat{K}$ is the Jacobian of the neural policy. Let $K$ be the optimal gain matrix of the actual system as a result of solving the DARE. By the Hartman-Grobman theorem there exists $\delta > 0$ such that the system $f(x, \pi_\phi^{\text{SAC}}(x))$ is topologically conjugate to its linearization. By Assumption \ref{Assumption_LQR}, we have that our learning algorithm converges to an estimate $\hat{K}$ such that 
    $$\|K - \hat{K}\|_2 < \frac{\epsilon_K}{\|B\|_2}.$$
    \noindent Moreover, by Assumption \ref{Assumption_actorloss} and Corollary 1, our learning algorithm converges to a neural policy such that is practically stabilizing to the neighborhood $B_\delta(0)$. Note that since the eigenvalues of a matrix are continuous functions of the entries of the matrix, there exists $\epsilon_K > 0$ such that for all matrices $M$ satisfying $\|A + BK - M\|_2 < \epsilon_K$, the eigenvalues $\lambda_1(M), \hdots, \lambda_m(M)$ all have negative real part. Since
    \begin{align*}
        \| A + B K - (A + B \hat{K}) ||_2 \leq \|B\|_2 \|K - \hat{K} \| \leq \epsilon_K.
    \end{align*}
    we have that the linearization for the system with the neural controller is Hurwitz. This proves local asymptotic stability of the neural policy and combined with the fact that $B_\delta(0)$ is reachable for all $x \in D$, this implies that the neural policy $\pi_\phi^{\text{SAC}}$ is asymptotically stabilizing.   

\end{proof}

\subsubsection{Alternative Result to Theorem 2.}
To end the section, as stated in Remark 3 we prove an alternative result involving different assumptions that is interesting in its own light. We provide an informal discussion in the hopes that it can lead to future work. Let $\Pi_{\text{NNet}}$ denote the space of neural networks. 

\begin{pro} \label{agent_reach}
    Fix $\epsilon > 0$ and a consider a compact state space $D$. Let $\pi^\star \in \Pi_{\text{NNet}}$ be the optimal policy that maximizes the deterministic RL algorithm over $\Pi_{\text{NNet}}$ with reward 
    $$r(x_n, u_n) = R - x_n^T Q x_n$$
    where $Q$ is a positive definite matrix and suppose that the system $x_{k+1} = f(x_k, u_k)$ is stabilizable with polynomial convergence rate 
    \begin{equation}
        \|x_n \| \leq \frac{C}{(n + 1)^p}
    \end{equation}
    for some $p > 0$ and $C > 0$. Furthermore, suppose that the discount factor $\gamma$ satisfies 
    \begin{equation}
        \left( \frac{1 - \lambda_{\min(Q)} \epsilon^2}{1 - \lambda_{\min(Q)} \epsilon^2/4} \right)^\kappa < \gamma \leq 1
    \end{equation}
        where $$\kappa = \left( \frac{\bar{\lambda} \epsilon}{2C} \right)^p$$ and $\bar{\lambda} = \sqrt{\frac{\lambda_{\min(Q)}}{\lambda_{\max(Q)}}}$. Then there exists $N \in \mathbb{N}$ such that $\| x_N \| < \epsilon$.
\end{pro}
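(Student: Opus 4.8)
The plan is to argue by contradiction. As in the earlier proofs, rescale so that $R=1$, and suppose that under the optimal neural policy $\pi^\star$ the trajectory $\{x_n\}$ starting from $x_0$ satisfies $\|x_n\|\ge\epsilon$ for every $n$. Then $x_n^T Q x_n \ge \lambda_{\min}(Q)\|x_n\|^2 \ge \lambda_{\min}(Q)\epsilon^2$ at every step, so each reward obeys $r(x_n,u_n)\le 1-\lambda_{\min}(Q)\epsilon^2$, and summing the geometric series gives the ceiling
\[
V_\gamma(x_0)=\mathbf{J}_\gamma^{\pi^\star}(x_0)\le \frac{1-\lambda_{\min}(Q)\epsilon^2}{1-\gamma}.
\]
The goal is then to produce an admissible controller whose discounted return strictly exceeds this ceiling, contradicting optimality of $\pi^\star$ under Assumption~\ref{Assumption_optimal}(a), and hence forcing the optimal trajectory to enter $B_\epsilon(0)$ at some step $N$.

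The natural comparison controller is the polynomially stabilizing one furnished by the hypothesis, whose trajectory $\{\tilde x_n\}$ satisfies $\|\tilde x_n\|\le C/(n+1)^p$ (in particular $\|\tilde x_0\|\le C$, so the trajectory stays in a bounded set and all its rewards are bounded below by $1-\lambda_{\max}(Q)C^2$). Since $\pi^\star$ is optimal only over $\Pi_{\text{NNet}}$, I would first pass to a neural approximation of this controller using the universal approximation theorem (Lemma~\ref{UAP_der}) together with continuity of the value functional (Lemma~\ref{value_approximation}), so that the neural policy matches the return up to an error that can be absorbed. The key quantitative observation is that once $\|\tilde x_n\|\le \bar{\lambda}\epsilon/2$ one has $\tilde x_n^T Q \tilde x_n \le \lambda_{\max}(Q)(\bar\lambda\epsilon/2)^2 = \tfrac14\lambda_{\min}(Q)\epsilon^2$, hence $r(\tilde x_n,\tilde u_n)\ge 1-\tfrac14\lambda_{\min}(Q)\epsilon^2$; by the polynomial rate this happens for all $n\ge m_0$, where $m_0$ is the first index with $(n+1)^p\ge 2C/(\bar\lambda\epsilon)$, which is exactly the index scale encoded by $\kappa=(\bar\lambda\epsilon/(2C))^p$.

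Splitting the polynomial controller's return at $m_0$, the tail contributes at least $(1-\tfrac14\lambda_{\min}(Q)\epsilon^2)\,\gamma^{m_0}/(1-\gamma)$ and the first $m_0$ terms contribute at least a finite (possibly negative) head term times $(1-\gamma^{m_0})/(1-\gamma)$. Requiring that the sum beat the ceiling $\tfrac{1-\lambda_{\min}(Q)\epsilon^2}{1-\gamma}$ rearranges into a lower bound on $\gamma^{m_0}$, and after taking logarithms and bounding the head contribution this collapses to the stated condition $\gamma > \big(\tfrac{1-\lambda_{\min}(Q)\epsilon^2}{1-\lambda_{\min}(Q)\epsilon^2/4}\big)^{\kappa}$ (valid when $\epsilon$ is small enough that both factors are positive). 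The intuition is that for $\gamma$ this close to $1$ the per-step reward surplus $\tfrac14\lambda_{\min}(Q)\epsilon^2$ earned over the essentially undiscounted tail outweighs the bounded loss incurred during the first $m_0$ steps, so the polynomial controller strictly beats the ceiling; this contradiction completes the proof.

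The step I expect to be the main obstacle is precisely the bookkeeping of the last paragraph: carrying the head-phase constant (which a priori involves $\lambda_{\max}(Q)C^2$) through the rearrangement so that it reduces to the clean ratio $(1-\lambda_{\min}(Q)\epsilon^2)/(1-\lambda_{\min}(Q)\epsilon^2/4)$ raised to the power $\kappa$, rather than a messier threshold, requires a careful choice of the split index and intermediate estimates and may only yield a sufficient (not sharp) version of the stated bound. A secondary technicality is ensuring the restriction of the supremum to $\Pi_{\text{NNet}}$ does not destroy the comparison, which is handled by the approximation lemmas at the cost of shrinking $\epsilon$ slightly.
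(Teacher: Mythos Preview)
Your proposal is essentially the paper's argument: contradiction, ceiling $\tfrac{1-\lambda_{\min}(Q)\epsilon^2}{1-\gamma}$ on $J^{\pi^\star}$, compare against the return of the polynomially stabilizing controller (pushed into $\Pi_{\text{NNet}}$ via the smooth universal approximation theorem), split at the first index where $\|\tilde x_n\|\le \bar\lambda\epsilon/2$, and bound the tail by $\gamma^{m_0}\tfrac{1-\lambda_{\min}(Q)\epsilon^2/4}{1-\gamma}$.

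The one simplification you are missing---and it dissolves exactly what you flag as the ``main obstacle''---is that the paper first rescales $Q$ (equivalently, takes $R$ large enough) so that $1-x^TQx\ge 0$ for all $x\in D$; this is possible because $D$ is compact, and it leaves the ratio $\bar\lambda$ invariant. With nonnegative rewards the head-phase sum $\sum_{k<m_0}\gamma^k(1-\tilde x_k^TQ\tilde x_k)$ is itself nonnegative and can simply be dropped, so no $\lambda_{\max}(Q)C^2$ bookkeeping is required: the tail bound alone already gives $J^{\pi_\phi}>\gamma^{m_0}\tfrac{1-\lambda_{\min}(Q)\epsilon^2/4}{1-\gamma}$, and comparing with the ceiling yields the stated threshold on $\gamma$ directly.
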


\begin{proof}
    As a first step, we show that if the system is stabilizable then there exists a neural policy $\pi_\phi$ such that $f(x, \pi_\phi(x))$ is locally asymptotically stable. Denote the stabilizing controller as $\kappa$, the linearization at the origin is given by 
    $$
    \nabla f^{\pi^\star}_{\mid x = 0} = A + B \nabla \kappa
    $$
    where $A + B \nabla \kappa$ is Hurwitz. By the smooth version of the universal approximation theorem on the compact set $D$ (which can be extended to vector value functions by placing the components of the function in a parallel architecture for the neural network) there exists a sequence of neural networks $\pi_{\phi}^n$ such that $\| \nabla \kappa - \nabla \pi_{\phi}^n \|_D \to 0$. This implies that 
    \begin{align*}
        \| \nabla f^{\kappa}_{\mid x = 0} - \nabla f^{\pi_n}_{\mid x = 0}  \| &= \| B (\nabla \kappa - \nabla \pi_n) \| \\
        &\leq \| B \| \| \nabla \kappa - \nabla \pi_n \| \to 0
    \end{align*}
    as $n \to \infty$. Thus, it follows that the spectra converge
    $$
    \sigma(\nabla f^{\pi_n}_{\mid x = 0}) \to \sigma(\nabla f^{\kappa}_{\mid x = 0})
    $$
    and there exists a neural policy $\pi_\phi$ such that $A + B \nabla \pi_\phi$ is Hurwitz and the system $f(x, \pi_\phi(x))$ is asymptotically stable in some neighborhood of the origin. From now on, we restrict $x$ to this neighborhood as we can always use the previous results to formulate global guarantees. \newline 

    \noindent Without loss of generality we can set $R = 1$ and rescale $Q$ such that $1 - x^T Q x \geq 0$ for all $x \in D$.Now, suppose for a contradiction that there exists $\varepsilon > 0$ such that $\Psi(k, x, \pi^\star_k(x)) \notin B_\varepsilon(0)$ for all $k$. This implies that 
    $$
    J^{\pi^\star}(x_0) \leq \frac{1 - \lambda_{\min}(Q) \epsilon^2 }{1 - \gamma}.
    $$
    
    \noindent Let $\{ x_k \}_{n=1}^\infty$ be the trajectory generated under the neural policy $\pi_\phi$ and since it is stabilizing we also suppose that it satisfies the same polynomial convergence it approximates the derivatives of the stabilizing policy. By the polynomial convergence assumption we denote $N$ as the first time step such that the following holds 
    $$\| x_N \| \leq \frac{\lambda_{\min}}{\lambda_{\max}} \epsilon.$$ 
    
    \noindent In particular we have that $N \geq \left( \frac{2C}{\bar{\lambda} \epsilon}\right)^\frac{1}{p}$.
    Under this policy we can find a lower bound for the objective function 
    \begin{align*}
        J^{\pi_\phi}(x_0) &= \sum_{k=1}^{N-1} \gamma^n (1 - x_k^T Q x_k) + \sum_{k=N}^{\infty} \gamma^n (1 - x_k^T Q x_k) \\
        &> \sum_{n=N}^{\infty} \gamma^n (1 - x\lambda_{\max}(Q) \frac{\epsilon^2}{4}) \\
        &= \gamma^N \frac{(1 - \lambda_{\max}(Q)\frac{\epsilon^2}{4}}{1 - \gamma}
    \end{align*}
    and so substituting the bounds for $N$ and $\gamma$ will show that 
    $$
    J^{\pi_\phi} > J^{\pi^\star}
    $$
    which contradicts the optimality assumption for $\pi^\star$. 
\end{proof}

\begin{rem}
This result can be combined with Assumption \ref{Assumption_LQR} to prove asymptotic convergence. The idea is to reach the region of attraction of the learned gain matrix that can be learned using our algorithm.
\end{rem}

\newpage
\subsection{Stabilization of our learning algorithm}
The figures below show that the algorithms presented in the numerical experiments section are also capable of learning an asymptotically stabilizing policy. Here we plot fifty trajectories of the positional components where the initial states are the same as the ones we use for the cost comparisons in Figure \ref{fig_costs}. We can also infer that the velocities must approach zero as the positional components remain at zero.

\begin{figure*}[h]
    \centering
    \begin{subfigure}[b]{0.45\textwidth}
        \centering
        \includegraphics[width=\textwidth]{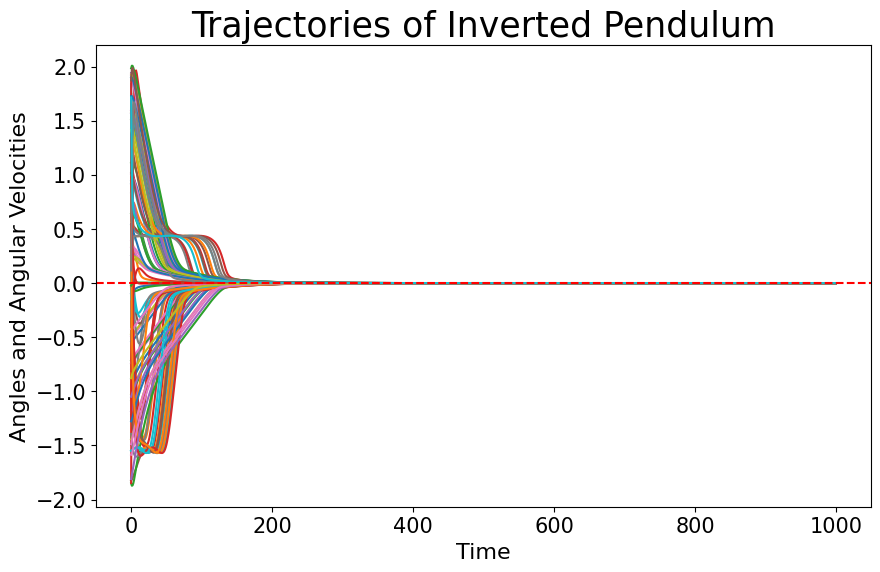}
        \caption{Inverted Pendulum}
        \label{fig:2a}
    \end{subfigure}
    \hfill
    \begin{subfigure}[b]{0.45\textwidth}
        \centering
        \includegraphics[width=\textwidth]{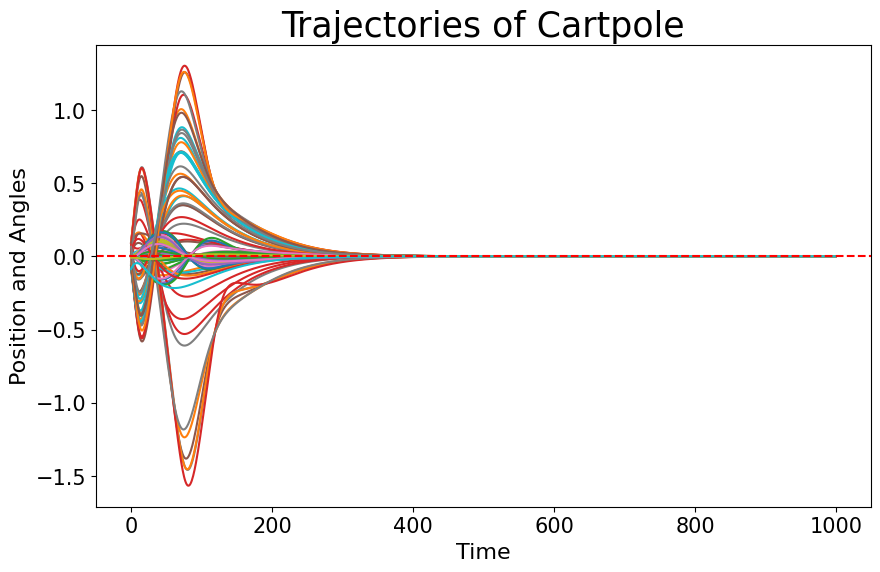}
        \caption{Cartpole}
        \label{fig:2b}
    \end{subfigure}
    
    \vspace{0.5cm}

    \begin{subfigure}[b]{0.45\textwidth}
        \centering
        \includegraphics[width=\textwidth]{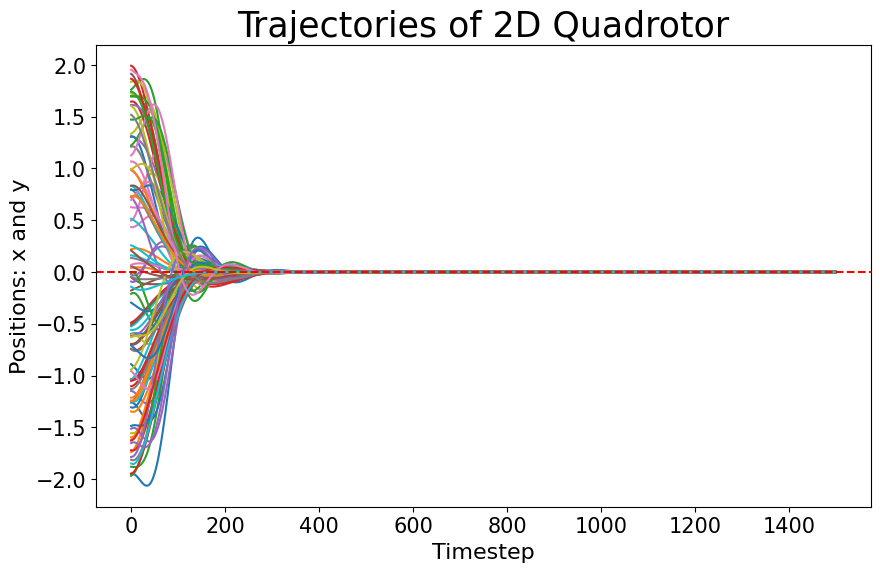}
        \caption{2D Quadrotor}
        \label{fig:2c}
    \end{subfigure}
    \hfill
    \begin{subfigure}[b]{0.45\textwidth}
        \centering
        \includegraphics[width=\textwidth]{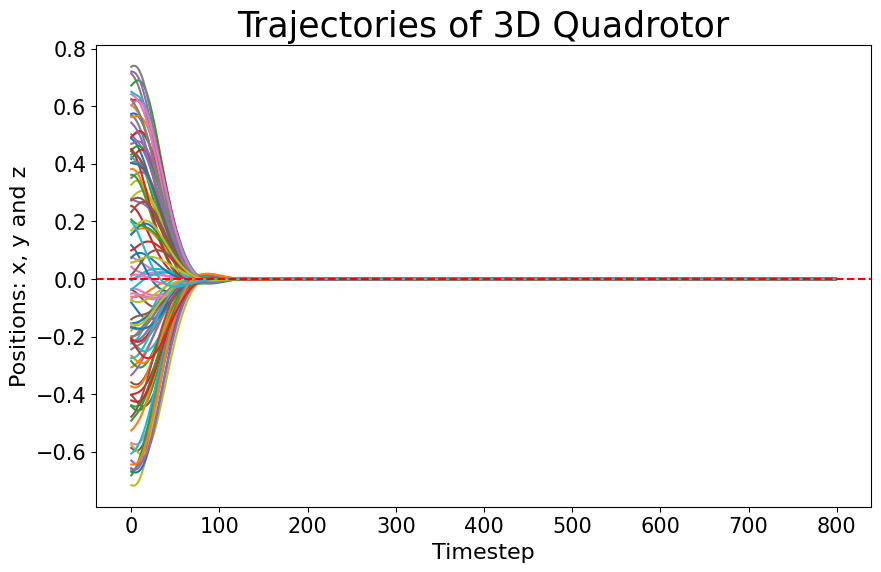}
        \caption{Placeholder for 3D quadrotor}
        \label{fig:2d}
    \end{subfigure}
    
    \caption{Stabilizing Trajectories for Our Algorithm over the Four Environments}
    \label{fig_stabilization}
\end{figure*}

\newpage
\subsection{Complete Learning Algorithm}

\begin{algorithm}
\caption{Soft Actor Critic with LQR for Unknown Dynamics}
\label{algorithm}
\begin{algorithmic}[1]
\State Initialize neural network parameters $\psi, \bar{\psi}, \theta, \phi$ and replay buffer $\mathcal{D}$
    \State Set input dimension ($n+m$), output dimension ($n$), discount factor ($\gamma$), entropy coefficient ($\alpha$), distance tolerance ($\eta$), positive definite matrices ($Q$ and $R$), reward scale, learning rate of each neural network
    \State Set a condition for updating $A$ and $B$
    \State $A, B, K \leftarrow $ Zero matrix, $K_{\text{available}} \leftarrow \textbf{False} $ 
    \For{each iteration}
    \For{each environment step}.
        \If{$\| x_{k} \| < \eta $}
        \State Take a zero action or non-zero action $\mathbf{a}_k \sim \pi_\phi\left(\mathbf{a}_k \mid \mathbf{x}_k\right)$
        \State Store $(\mathbf{x}_k, \mathbf{a}_k)$ for (\ref{learnA}, \ref{learnB})  
        \Else 
        \State $\mathbf{a}_k \sim \pi_\phi\left(\mathbf{a}_k \mid \mathbf{x}_k\right)$
        \EndIf
        \State $\mathbf{x}_{k+1} \sim p\left(\mathbf{x}_{k+1} \mid \mathbf{x}_k, \mathbf{a}_k\right)$
        
        \State $\mathcal{D} \leftarrow \mathcal{D} \cup\left\{\left(\mathbf{x}_k, \mathbf{a}_k, r\left(\mathbf{x}_k, \mathbf{a}_k\right), \mathbf{x}_{k+1}\right)\right\}$ 
    \EndFor
    \If{Update $A$ satisfied}
    \State Update $A$ according to (\ref{learnA})
    \EndIf
    \If{Update $B$ Satisfied}
    \State Update $B$ according to (\ref{learnB})
    \State Calculate $K$ according to $A, B$ and the DARE \Comment{$A$ is always updated before $B$}
    \State $K_{\text{available}} \leftarrow \textbf{True} $ 
    \EndIf
    
    \For{each gradient step}
    \State $\psi \leftarrow \psi-\lambda_V \hat{\nabla}_\psi J_V(\psi)$
    \State $\theta_i \leftarrow \theta_i-\lambda_Q \hat{\nabla}_{\theta_i} J_Q\left(\theta_i\right) \text { for } i \in\{1,2\}$
    \If{$K_{\text{available}}$}
    \State $\phi \leftarrow \phi-\lambda_\pi \hat{\nabla}_\phi J(\phi)$ where the actor objective is given by (\ref{gain_actor_loss})
    \Else
    \State $\phi \leftarrow \phi-\lambda_\pi \hat{\nabla}_\phi J_\pi(\phi)$
    \EndIf
    \State $\bar{\psi} \leftarrow \tau \psi+(1-\tau) \bar{\psi}$
    \EndFor

\EndFor

\end{algorithmic}
\end{algorithm}

\newpage

\subsection{Additional Details on SAC and Verification of Stability}

\subsubsection{Data Collection for LQR} Since collecting data while the agent navigates the environment can be a time consuming process, in many cases it is more efficient to collect state measures to get a baseline estimate of the gain matrix and improve this estimate as the algorithm runs. 

\subsubsection{Cost Matrices} 
The positive semidefinite matrices $Q$ and $R$ are always diagonal matrices. Please refer to the test files in the code for the supplementary material to see the choice of $Q$ and $R$. 

\subsubsection{Target network update} 
Since SAC deals with continuous state spaces, it is common in this setting to use a separate target value function that tracks the actual value function to improve the stability of the learning algorithm. As in \cite{Lillicrap2015-pt}, to update the target network, we use an exponentially moving average with smoothing constant $\tau$. The possible values of $\tau$ range from zero to one, where a value of zero means no update to the target value network and a range of one means a complete copy of the current weights. We found that the same value of $\tau$ as in the original SAC paper was suitable for all the stabilization tasks \cite{Haarnoja2018-ha}.  

\subsubsection{Reward Scale}
SAC is dependent on the scaling of the reward signal since this parameter affects the temperature parameter that controls the stochasticity of the policy. We found that a reward scale of two generally works well for our stabilization tasks. 

\subsubsection{Hyperparameters}
Table \ref{table_hyper} below lists the hyperparameters that are common to all environments that are used in the comparative evaluation in Figure \ref{fig_costs}. We choose the hyperbolic tangent as the activation function as we believe that for continuous control tasks, the policy should vary smoothly with respect to the state and it also satisfies the assumptions of Theorem \ref{thm_nnet_stab}. Note that since PPO is an on-policy algorithm, we do not implement a replay buffer for that algorithm. Additionally, Table \ref{table_networks} lists the size of the neural networks for all stabilization tasks. We note that since stabilization requires precise actions near the origin and the policy may have to learn a potentially large gain matrix at the origin, a small network may not generalize to large domains as we have in Figure \ref{fig_stabilization}. Therefore, we employ larger networks for greater approximation ability. 

\begin{table}[h]
    \centering
    \caption{SAC Hyperparameters}
    \label{table_hyper}
    \begin{tabular}{ll}
        \toprule
        Parameter & Value \\
        \midrule
        \textbf{Shared} & \\
        optimizer & Adam \cite{DBLP:journals/corr/KingmaB14} \\
        learning rate & $3 \cdot 10^{-4}$ \\
        discount ($\gamma$) & 0.99 \\
        replay buffer size & $10^6$ \\
        number of samples per minibatch & 256 \\
        activation & Hyperbolic Tangent \\
        \midrule
        \textbf{SAC} & \\
        target smoothing coefficient ($\tau$) & 0.005 \\
        target update interval & 1 \\
        gradient steps & 1 \\
        reward scale & 2 \\
        \bottomrule
    \end{tabular}
\end{table}

\begin{table}[h]
    \centering
    \caption{SAC Environment Specific Networks}
    \label{table_networks}
    \begin{tabular}{llll}
        \toprule
        Environment & Action Dimensions & Number of Hidden Layers & Hidden Layer Neurons \\
        \midrule
        Inverted Pendulum & 1 & 2 & 16 \\
        Cartpole & 1 & 2 & 256 \\
        Quadrotor-2D & 2 & 4 & 256 \\
        Quadrotor-3D & 4 & 5 & 256 \\
        \bottomrule
    \end{tabular}
\end{table}

\subsection{Verification for stability}
We verify the stability of the inverted pendulum with the quadratic Lyapunov function $V =  x^T P x = 1.02512 x_1^2 + 0.0216179 x_2^2 + 0.0379687 x_1  x_2$, obtained by linearizing the continuous-time system. The matrix $P$ is computed by solving the Lyapunov equation $A^T P + PA = -Q$ by setting $Q = I$. As illustrated in Fig.~\ref{fig:roa}, the red ellipse is the ROA estimate, while the green and blue curves are the level sets of the Lyapunov function.

\begin{figure}
    \centering
    \includegraphics[width=0.85\linewidth]{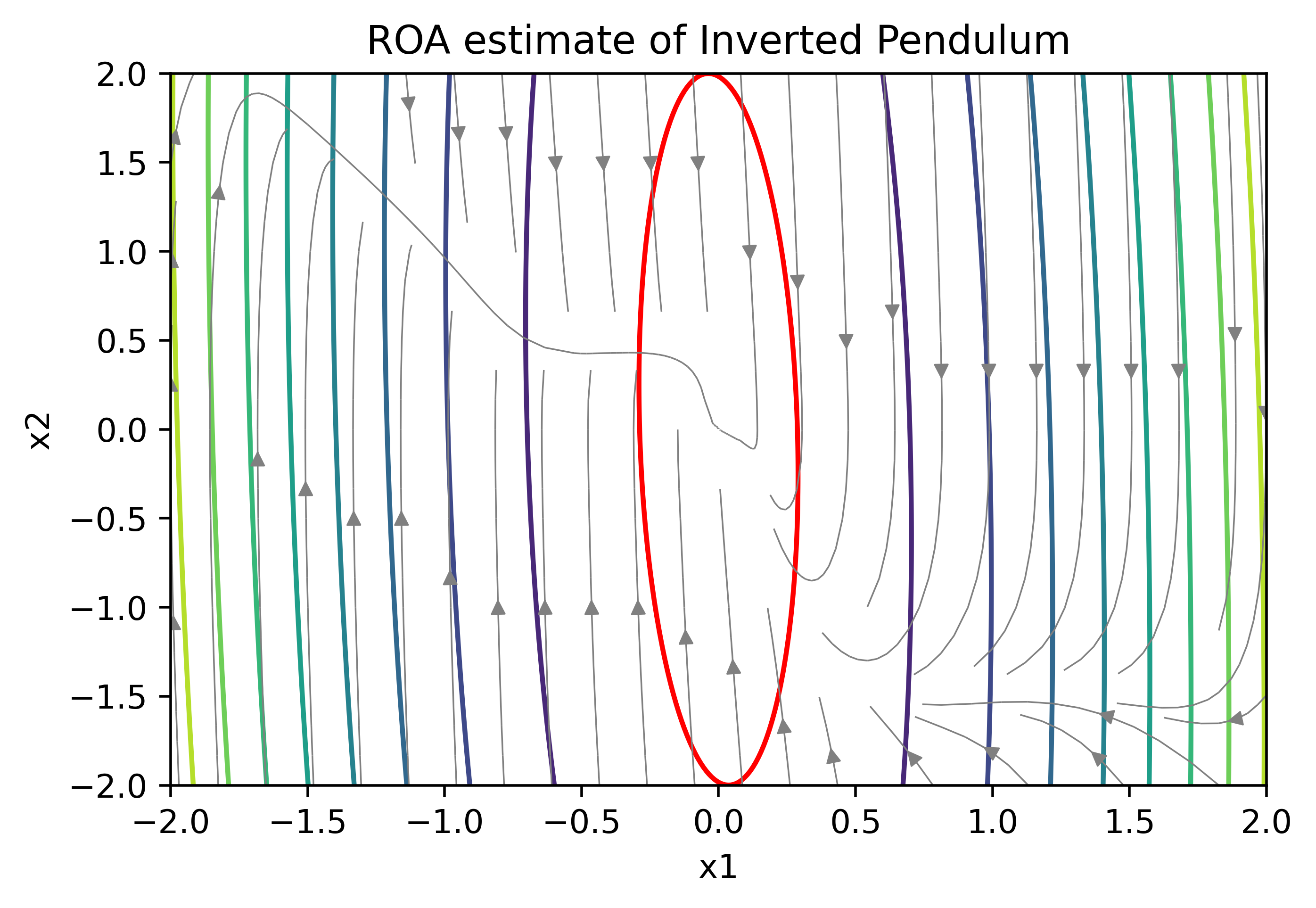}
    \caption{Verified ROA estimate for the inverted pendulum case, where the red curve is the ROA estimate.}
    \label{fig:roa}
\end{figure}

\subsection{Limitations and Future Work} 
\textbf{Convergence Analysis:} In the convergence analysis we assumed that the RL algorithm is capable of learning a near optimal policy. An interesting direction for future research could be to establish conditions for convergence in the neural network setting and to derive results for convergence rates or finite sample approximation guarantees. \newline 

\noindent \textbf{Practical Applications:} An interesting research direction would be to assess the performance improvements of our stabilizing controllers for practical safety critical tasks. \newline 

\noindent \textbf{Verification:} With LyZNet, we can only verify the asymptotic stability for the low-dimensional systems with the learned neural policy, due to the poor scalability of the SMT solver. In the future, we will study how to verify the stability for higher-dimensional systems, given the complexity of the neural network controller. 

\end{document}